\documentclass{article}
\pdfoutput=1


\usepackage[preprint, nonatbib]{neurips_2024}




\usepackage[utf8]{inputenc} 
\usepackage[T1]{fontenc}    
\usepackage{hyperref}       
\usepackage{url}            
\usepackage{booktabs}       
\usepackage{amsfonts}       
\usepackage{nicefrac}       
\usepackage{microtype}      
\usepackage{xcolor}         
\usepackage{bm} 
\usepackage{wrapfig}
\usepackage{amsmath}
\usepackage{amssymb}
\usepackage{mathtools}
\usepackage{amsthm}
\usepackage{graphicx}
\theoremstyle{plain}

\newtheorem{proposition}{Proposition}

\theoremstyle{definition}

\theoremstyle{remark}

\usepackage{algorithm} 
\usepackage{algorithmic}
\usepackage{multicol}
\usepackage{multirow}
\usepackage{xcolor}
\definecolor{lightgray}{rgb}{0.9, 0.9, 0.9} 
\title{Knowing What Not to Do: Leverage Language Model Insights for Action Space Pruning in Multi-agent Reinforcement Learning}

%

\author{%
  Zhihao Liu\textsuperscript{1,4}$^*$, Xianliang Yang\textsuperscript{2},  Zichuan Liu\textsuperscript{3}, Yifan Xia\textsuperscript{3},\\
  \textbf{Wei Jiang\textsuperscript{5},  Yuanyu Zhang\textsuperscript{6}, Lijuan Li\textsuperscript{1}, Guoliang Fan\textsuperscript{1},  Lei Song\textsuperscript{2},  Jiang Bian\textsuperscript{2}$^\dagger$}\\
  \textsuperscript{1}Institute of Automation, Chinese Academy of Sciences, \textsuperscript{2}Microsoft Research Asia,\\
  \textsuperscript{3}Nanjing University,  \textsuperscript{4}School of Artificial Intelligence, University of Chinese Academy of Sciences,\\
  \textsuperscript{5}University of Illinois Urbana-Champaign, \textsuperscript{6}Guizhou University\\
  \texttt{\{liuzhihao2022, lijuan.li, guoliang.fan\}@ia.ac.cn},\\ \texttt{\{zichuanliu, yfxia\}@smail.nju.edu.cn}, \texttt{zhangyuanyu18@gmail.com}\\
  \texttt{\{xianya, lei.song, jiang.bian\}@microsoft.com}, \texttt{weij4@illinois.edu}\\
}


\newcommand\nnfootnote[1]{%
  \begin{NoHyper}
  \renewcommand\thefootnote{}\footnote{#1}%
  \addtocounter{footnote}{-1}%
  \end{NoHyper}
}

\begin{document}

\nnfootnote{$^*$ Work done during internship at Microsoft Research Asia.}
\nnfootnote{$\dagger$ Corresponding author.}

\maketitle

\begin{abstract}\label{abstract}
  Multi-agent reinforcement learning (MARL) is employed to develop autonomous agents that can learn to adopt cooperative or competitive strategies within complex environments. However, the linear increase in the number of agents leads to a combinatorial explosion of the action space, which may result in algorithmic instability, difficulty in convergence, or entrapment in local optima. While researchers have designed a variety of effective algorithms to compress the action space, these methods also introduce new challenges, such as the need for manually designed prior knowledge or reliance on the structure of the problem, which diminishes the applicability of these techniques.
  In this paper, we introduce \textbf{E}volutionary action \textbf{SPA}ce \textbf{R}eduction with \textbf{K}nowledge (\texttt{eSpark}), an exploration function generation framework driven by large language models (LLMs) to boost exploration and prune unnecessary actions in MARL. Using just a basic prompt that outlines the overall task and setting, \texttt{eSpark} is capable of generating exploration functions in a zero-shot manner, identifying and pruning redundant or irrelevant state-action pairs, and then achieving autonomous improvement from policy feedback. In reinforcement learning tasks involving inventory management and traffic light control encompassing a total of 15 scenarios, \texttt{eSpark} consistently outperforms the combined MARL algorithm in all scenarios, achieving an average performance gain of 34.4\% and 9.9\% in the two types of tasks respectively. Additionally, \texttt{eSpark} has proven to be capable of managing situations with a large number of agents, securing a 29.7\% improvement in scalability challenges that featured over 500 agents. The code can be found in \url{https://github.com/LiuZhihao2022/eSpark.git}.
\end{abstract}

\section{Introduction}

\label{Introduction}
Multi-agent reinforcement learning (MARL) has emerged as a powerful paradigm for solving complex and dynamic problems that involve multiple decision-makers~\cite{zhang2021multi, wang2020qplex}. 
However, the intricacies of agent interplay and the exponential expansion of state and action spaces render the solution of MARL problems difficult. Researchers have proposed the Centralized Training with Decentralized Execution (CTDE) framework~\cite{oliehoek2008optimal} and parameter sharing methods, decomposing the value or policy functions of a multi-agent system into individual agents and sharing model parameters among all agents. As experimentally verified by many of the most prominent MARL algorithms such as Multi-agent PPO (MAPPO)~\cite{yu2022surprising}, QMIX~\cite{rashid2020monotonic}, QPLEX~\cite{wang2020qplex} or QTRAN~\cite{son2019qtran}, these methodologies have been demonstrated to be robust strategies for surmounting the challenges posed by MARL. MARL methods based on parameter sharing and CTDE have achieved notable success in a variety of well-established tasks, including StarCraft Multi-Agent Challenge (SMAC)~\cite{li2023explicit, wang2020roma}, the Multi-Agent Particle Environment (MPE)~\cite{lowe2017multi}, and  Simulation of Urban MObility (SUMO)~\cite{wei2019colight,lu2023dualight}.

Despite the great success of parameter-sharing CTDE methods, their practicality dwindles in real-world tasks involving large number of agents, such as large-scale traffic signal control~\cite{mousavi2017traffic}, wireless communication networks~\cite{zocca2019temporal}, and humanitarian assistance and disaster response~\cite{meier2015digital}, where centralized training becomes impractical due to large problem scale~\cite{munir2021multi}. Fully Decentralized Training and Execution (DTDE) methods, such as Independent PPO (IPPO)~\cite{de2020independent}, offer a scalable solution where resource consumption does not escalate drastically with an increase in the number of agents. However, due to the lack of consideration for agent interactions, they often struggle to find optimal solutions and fall into local optima. Current strategies for addressing large-scale MARL tasks involve introducing task-specific structures to model agent interactions or dividing agents into smaller, independently trained groups~\cite{ying2023scalable, chen2020toward}. These methods, however, are constrained by their dependence on task-related structuring, limiting their applicability to a narrow range of problems.

Additionally, the "curse of dimensionality" presents a significant challenge in multi-agent systems~\cite{hao2022breaking, jianye2022boosting}, where agents are required to navigate through an expansive action space saturated with numerous actions that are either irrelevant or markedly suboptimal (relative to states). While humans can deftly employ contextual cues and prior knowledge to sidestep such challenges, MARL algorithms typically engage in the exploration of superfluous and extraneous suboptimal actions~\cite{Zahavy_Haroush_Merlis_Mankowitz_Mannor_2018}. Besides, prevailing parameter sharing can exacerbate this exploratory dilemma, as will be elucidated in Proposition~\ref{proposition 2}. 
The issue occurs primarily because agents with shared parameters often prefer suboptimal policies that present short-term advantages, rather than exploring policies that may potentially deliver higher long-term returns.

Exploration is crucial for overcoming local optima, as it encourages agents to discover potentially better states thus refining their policies. While single-agent exploration techniques like the Upper Confidence Bound (UCB)~\cite{auer2002using}, entropy regularization~\cite{haarnoja2018soft}, and curiosity-based exploration~\cite{groth2021curiosity, pathak2017curiosity} have shown promising results, they struggle with the escalated complexity in MARL scenarios, compounded by issues like deceptive rewards and the "Noisy-TV" problem~\cite{burda2018exploration}. Integrating domain knowledge into exploration could significantly enhance exploration efficiency, by helping identify critical elements and problem structures, thereby aiding in the selection of optimal actions~\cite{simon1956rational}. However, the integration of this knowledge within a data-driven framework poses significant challenges, particularly when manual input from domain experts is required, thus reducing its practicality.

Recently, Large Language Models (LLMs) such as GPT-4~\cite{achiam2023gpt} have shown formidable skills in language comprehension, strategic planning, and logical reasoning across various tasks~\cite{yao2022react,zhu2023ghost}.
Although not always directly solving complex, dynamic problems, their inferential and error-learning abilities facilitate progressively better solutions through iterative feedback~\cite{ma2023eureka}. 
The integration of LLMs with MARL presents a promising new avenue by facilitating exploration through the pruning of redundant actions. In this paper, we introduce \textbf{E}volutionary action \textbf{SPA}ce \textbf{R}eduction with \textbf{K}nowledge (\texttt{eSpark}), a novel approach that utilizes LLMs to improve MARL training via optimized exploration functions, which are used to prune the action space. \texttt{eSpark} begins by using LLMs to generate exploration functions from task descriptions and environmental rules in a zero-shot fashion. It then applies evolutionary search within MARL to pinpoint the best performing policy. Finally, by analyzing the feedback on the performance of this policy, \texttt{eSpark} reflects and proposes a set of new exploration functions, and iteratively optimizes them according to the aforementioned steps. This process enhances the MARL policy by continuously adapting and refining exploration. To summarize, our contributions are as follows: 

\begin{enumerate}
    \item We introduce the \texttt{eSpark} framework, which harnesses the intrinsic prior knowledge and encoding capability of LLMs to design exploration functions for action space pruning, thus guiding the exploration and learning process of MARL algorithms. \texttt{eSpark} requires no complex prompt engineering and can be easily combined with MARL algorithms.
    \item We validate the performance of \texttt{eSpark} across 15 different environments within the inventory management task MABIM~\cite{yang2023versatile} and the traffic signal control task SUMO~\cite{behrisch2011sumo}. Combined with IPPO, \texttt{eSpark} outperforms IPPO in all scenarios, realizing an average profit increase of 34.4\% in the MABIM and improving multiple metrics in SUMO by an average of 9.9\%. Even in the face of scalability challenges where the DTDE methods typically encounter limitations, \texttt{eSpark} elevates the performance of IPPO by 29.7\%.
    \item 
    We conduct controlled experiments and ablation studies to analyze the effectiveness of each component within the \texttt{eSpark} framework. We first validate the advantages of knowledge-based pruning. Subsequently, we conduct ablation studies to demonstrate that both retention training and LLM pruning techniques contribute to the performance of \texttt{eSpark}. These effects are even more pronounced in the more complex MABIM environment.
\end{enumerate}

\section{Related works}
\textbf{LLMs for code generation.} LLMs have exhibited formidable capabilities in the domain of code generation~\cite{roziere2023code, nejjar2023llms}. More recently, LLM-based approaches to self-improving code generation have been applied to address challenges in combinatorial optimization~\cite{liu2024example, ye2024reevo}, robotic tasks~\cite{liang2023code, wang2024demo2code}, reinforcement learning (RL) reward design~\cite{ma2023eureka}, and code optimization~\cite{romera2024mathematical}. For the first time, to the best of our knowledge, we propose the use of LLMs to generate code with the aim of pruning the redundant action space in MARL environments, and through a process of autonomous reflection and evolution, iteratively enhancing the quality of the generated output.

\textbf{LLMs for RL and MARL.} The integration of LLMs into RL and MARL has sparked considerable research interest~\cite{sharma2022skill, kwon2023reward, li2024verco}. Some works~\cite{hill2020human, chan2019actrce} incorporate the goal descriptions of language models that help in enhancing the generalization capabilities of agents designed to follow instructions. Further studies have extended this approach to complex tasks involving reasoning and planning~\cite{huang2023inner, huang2022language}. Moreover, LLMs have been employed to guide exploration and boost RL efficiency~\cite{du2023guiding, chang2023learning, hu2023language}. However, scaling to high-complexity, real-time, multi-agent settings remains a challenge. Our method mitigates this by generating exploration functions to navigate the policy space, thus facilitating the application to complex MARL scenarios without direct LLM-agent decision-making interaction.

\textbf{Action space pruning in RL and MARL.} 
Pruning the action space has been shown to be effective in guiding agent behaviors in complex environments~\cite{lipton2016combating,fulda2017can}. Techniques include learning an elimination signal to discard unnecessary actions~\cite{Zahavy_Haroush_Merlis_Mankowitz_Mannor_2018}, and employing transfer learning that pre-trains agents to isolate useful experiences for later action refinement~\cite{shirali2023pruning,lan2022transfer,ammanabrolu2018playing}. Some approaches use manually designed data structures based on prior knowledge to filter actions~\cite{dulac2015deep,padullaparthi2022falcon,Nagarathinam_Menon_Vasan_Sivasubramaniam_2020}. However, training pruning signals or applying transfer learning is inherently difficult with many agents, and the need for expert knowledge in manual pruning rules hampers their transferability, limiting the applicability of current methods. We harness the abundant knowledge embedded within LLMs for action space pruning, demonstrating universal applicability across a multitude of scenarios.

\section{Preliminaries}
\subsection{Problem formulation and notations}\label{problem formulation}
\paragraph{Markov game framework.} In our study, we explore a Markov game framework, formally defined by the tuple $\left \langle N, \mathcal{S}, \bm{\mathcal{O}}, \bm{\mathcal{A}}, P, R, \gamma \right \rangle$. Here $N$ represents the total number of participating agents, $\mathcal{S}$ denotes a well-defined state space, and $\bm{\mathcal{O}}=\prod_{k=1}^N\mathcal{O}^k$ constitutes the combined observation space, $\bm{\mathcal{A}}=\prod_{k=1}^N\mathcal{A}^k$ is the joint action space for all agents involved. The transition dynamics are captured by the probability function $P:\mathcal{S}\times\bm{\mathcal{A}}\times\mathcal{S}\to [0,1]$, the reward function $R:\mathcal{S}\times\bm{\mathcal{A}}\to \mathbb{R}$ maps state-action pairs to real-valued rewards. The discount factor is denoted by $\gamma\in [0,1]$.

At each discrete time step $t$, the environment is in state $s_t\in \mathcal{S}$. Each agent $k\in [1,2,\dots,N]$ receives an observation $o_t^k\in \mathcal{O}^k$ and draws an action from $a_t^k\sim\pi^k(\cdot\mid o_t^k)$, where $\pi^k:\mathcal{O}^k\times\mathcal{A}^k\rightarrow [0,1]$ denotes the policy of agent $k$, and $\sum_{a^k\in\mathcal{A}^k}\pi^k(a^k \mid  o_t^k)=1$. The joint actions of all agents $\textbf{a}_t=(a_t^1,a_t^2,\ldots,a_t^N)$ is drawn from the joint policy $\bm{\pi}(\cdot\mid s_t)=\prod_{k=1}^N\pi^k(\cdot\mid o_t^k)$. Subsequently, a reward ${\rm r}_t=R(s_t,\textbf{a}_t)$ is given based on the current state and joint action. The state transition is determined by $s_{t+1} \sim P(\cdot\mid s_t,\textbf{a}_t)$.

In this paper, we focus on a fully cooperative scenario where all agents share a common reward signal. The collective objective is to maximize the expected cumulative reward, starting from an initial state distribution $\rho^0$. This collaborative approach emphasizes the alignment of individual agent strategies towards maximizing a unified reward $J(\bm{\pi})$:
\[
    J(\bm{\pi})=\mathbb{E}_{s_0\sim \rho^0, \textbf{a}_{0:\infty}\sim \bm{\pi}, s_{1:\infty}\sim P}
    \left[\sum_{t=0}^\infty {\rm \gamma^t r_t}\right].
\]

\paragraph{Policy with exploration function.}In the configuration of our study, we introduce the exploration function $E: \mathcal{O}^k \times \mathcal{A}^k \rightarrow \{0,1\}$, indicating whether an action is selectable by agent $k$. For a given policy $\pi^k$ of agent $k$ and an exploration function $E$, we define a new policy guided by $E$, denoted by $\pi_E^k$, as follows:
\[
\pi_E^k(\cdot\mid o_t^k)=\frac{\pi^k(\cdot \mid  o_t^k)\cdot E(o_t^k, \cdot)}{\sum_{a^k\in\mathcal{A}^k}\pi^k(a^k \mid  o_t^k)\cdot E(o_t^k, a^k)}
\]
if $\sum_{a^k\in\mathcal{A}^k}\pi^k(a^k \mid  o_t^k)\cdot E(o_t^k, a^k) > 0$; otherwise, $\pi_E^k(\cdot\mid o_t^k)=\pi^k(\cdot \mid  o_t^k)$. Consequently, the joint policy for all agents under the guidance of $E$ is defined as: 
\[
\boldsymbol{\pi}_E(\cdot\mid s_t) = \prod_{k=1}^N \pi_E^k(\cdot\mid o_t^k).
\]

We define the set of all joint policies as $\{\bm{\pi}\}$ and the set of all exploration functions as $\{E\}$. Let $\{\bm{\pi}_E\}$ denote the set of joint policies when subjected to an exploration function $E \in \{E\}$. An exploration function $E$ is non-trivial if it assigns a zero probability to at least one observation-action pair. The following proposition naturally arises from the definition: 
\begin{proposition} \label{proposition 1}
\begin{enumerate}
    \item[] 
    \item For any $E \in \{E\}$, $\{\bm{\pi}_E\}\subseteq\{\bm{\pi}\}$. If $E$ is non-trivial, then $\{\bm{\pi}_E\}\subset\{\bm{\pi}\}$.
    \item For any $\bm{\pi} \in \{\bm{\pi}\}$, there exists a non-trivial $E\in\{E\}$ such that $J(\bm{\pi}_{E}) \geq J(\bm{\pi})$.
\end{enumerate}
\end{proposition}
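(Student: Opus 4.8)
The plan is to handle the two parts separately, each following almost immediately from unrolling the definitions.

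For Part 1, fix any $E \in \{E\}$ and any joint policy $\bm{\pi}_E \in \{\bm{\pi}_E\}$. I would observe that each factor $\pi_E^k(\cdot\mid o_t^k)$ is, by construction, either the renormalization of $\pi^k$ restricted to the support allowed by $E$, or else $\pi^k$ itself; in both cases it is a valid probability distribution over $\mathcal{A}^k$, i.e.\ an admissible single-agent policy. Hence the product $\prod_{k=1}^N \pi_E^k$ is itself an element of $\{\bm{\pi}\}$, giving $\{\bm{\pi}_E\}\subseteq\{\bm{\pi}\}$. For the strict inclusion when $E$ is non-trivial, I would exhibit a joint policy not reachable under $E$: pick an observation-action pair $(\bar o^k, \bar a^k)$ with $E(\bar o^k,\bar a^k)=0$, and take any $\bm{\pi}$ whose $k$-th component puts probability $1$ on $\bar a^k$ at $\bar o^k$ (and is arbitrary elsewhere). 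Under $E$, the only way $\pi_E^k(\bar a^k \mid \bar o^k)>0$ would require $\pi^k(\bar a^k\mid \bar o^k)\cdot E(\bar o^k,\bar a^k)>0$, impossible since $E(\bar o^k,\bar a^k)=0$ — and the fallback branch is not triggered precisely because one can choose the surrounding $\pi^k(\cdot\mid\bar o^k)$ to have positive mass on some allowed action, so $\pi_E^k(\bar a^k\mid\bar o^k)=0\neq 1$. Thus this $\bm{\pi}\in\{\bm{\pi}\}\setminus\{\bm{\pi}_E\}$, establishing $\{\bm{\pi}_E\}\subset\{\bm{\pi}\}$. (A small degenerate caveat: if $\mathcal{A}^k$ is a singleton the claim is vacuous; I would note $E$ non-trivial forces $|\mathcal{A}^k|\geq 2$ at the relevant observation anyway.)

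For Part 2, fix an arbitrary $\bm{\pi}\in\{\bm{\pi}\}$. The idea is to build $E$ so that it prunes exactly the actions that $\bm{\pi}$ already never takes, so that $\bm{\pi}_E = \bm{\pi}$ and hence $J(\bm{\pi}_E) = J(\bm{\pi})$, which in particular gives the desired inequality. Concretely, for each agent $k$ set $E(o^k, a^k) = 1$ if $\pi^k(a^k\mid o^k) > 0$ and $E(o^k,a^k)=0$ otherwise. Then for every reachable $o_t^k$ we have $\sum_{a^k}\pi^k(a^k\mid o_t^k)E(o_t^k,a^k) = \sum_{a^k}\pi^k(a^k\mid o_t^k) = 1 > 0$, so the renormalization branch applies and $\pi_E^k(\cdot\mid o_t^k) = \pi^k(\cdot\mid o_t^k)$; taking the product over $k$ yields $\bm{\pi}_E = \bm{\pi}$ and therefore $J(\bm{\pi}_E)=J(\bm{\pi})$. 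It remains to ensure $E$ can be chosen non-trivial: as long as there is some agent $k$ and some observation $o^k$ at which $\pi^k$ does not have full support — i.e.\ $\pi^k(a^k\mid o^k)=0$ for some $a^k$ — this $E$ already vanishes on an observation-action pair and we are done. If instead $\bm{\pi}$ has full support everywhere, I would simply perturb: pick one agent $k$, one observation $\bar o^k$, and one action $\bar a^k$ realizing the minimum of $\pi^k(\cdot\mid\bar o^k)$ over $\mathcal{A}^k$; redefine $E(\bar o^k,\bar a^k)=0$ and shift a vanishingly small amount of probability mass off $\bar a^k$ in the comparison — or, cleaner, argue that among the finitely many states one can always choose which full-support action to drop; then one checks $J$ changes continuously and the resulting $\bm{\pi}_E$ can be taken with $J(\bm{\pi}_E)\geq J(\bm{\pi})$ by instead dropping an action that is suboptimal at that state.

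I expect the only genuine subtlety — and the main obstacle — to be this last point in Part 2: guaranteeing $E$ is \emph{non-trivial} while keeping $J(\bm{\pi}_E)\geq J(\bm{\pi})$. The clean resolution, which I would pursue, is to not use the "$E$ mirrors the support of $\bm{\pi}$" construction blindly but rather to choose, at one fixed state-action pair, an action $\bar a^k$ that is strictly suboptimal given the other agents' policies (such a pair exists unless $\bm{\pi}$ is already a best response with a unique support, in which case a support-shrinking $E$ that keeps one optimal action suffices) — pruning it cannot decrease $J$ because it only removes probability mass that was being placed on an inferior choice, redistributing it onto at-least-as-good actions; formalizing "cannot decrease" is a one-line advantage/Bellman-monotonicity argument over the finite action set. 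Everything else is bookkeeping with the two-branch definition of $\pi_E^k$.
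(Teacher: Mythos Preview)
Part 1 is essentially the paper's argument: both verify that each $\pi_E^k$ remains a valid probability distribution in either branch of the definition, then exhibit a joint policy lying outside $\{\bm{\pi}_E\}$ when $E$ is non-trivial (you pick a deterministic policy on a forbidden action; the paper picks any full-support policy).

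Part 2 is where you diverge. The paper works uniformly through value and advantage functions: it derives
\[
V_{\bm{\pi}_E}(s)-V_{\bm{\pi}}(s)=-\frac{\sum_{\textbf{a}}\bigl(1-\textbf{E}(s,\textbf{a})\bigr)\,\bm{\pi}(\textbf{a}\mid s)\,A_{\bm{\pi}}(s,\textbf{a})}{\sum_{\textbf{a}}\textbf{E}(s,\textbf{a})\,\bm{\pi}(\textbf{a}\mid s)},
\]
and then notes that since advantages average to zero under $\bm{\pi}$, one can always choose $E$ to prune only non-positive-advantage actions, making the right-hand side non-negative at every state. Your route instead short-circuits the generic case: set $E$ equal to the support indicator of $\bm{\pi}$, so that $\bm{\pi}_E=\bm{\pi}$ and $J(\bm{\pi}_E)=J(\bm{\pi})$ with no computation at all, invoking an advantage-style argument only as a fallback for the full-support edge case. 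Your construction is more elementary and direct for the bare existence claim, while the paper's calculation is more informative---it yields a quantitative formula identifying exactly which $E$'s improve $J$ and by how much, which is precisely the insight the paper wants in order to motivate intelligent pruning. Both treatments share the same small loose end: the advantage reasoning is phrased at the joint-action level, but $E$ must factor across agents as $\textbf{E}(s,\cdot)=\prod_k E(o^k,\cdot)$, and neither you nor the paper explicitly checks that a suitable \emph{factored} $E$ can be extracted (it can, by pruning a single agent's action with non-positive marginal advantage, but this step is left implicit).
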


An intelligent choice of exploration functions does not diminish our ability to discover optimal policies; instead, it allows us to refine the policy space, thereby enhancing the efficiency of the learning process. The proof of this proposition can be found in Appendix~\ref{appendix 1}.

\subsection{Challenges and motivations}
The intricate relationships among multiple agents make it extremely difficult to search for the optimal solution in MARL. Without powerful exploration methods, it is nearly impossible to avoid suboptimal outcomes.
We will elaborate on this with an example from the following proposition:

\begin{proposition} \label{proposition 2}
    Let's consider a fully cooperative game with N agents, one state, and the joint action space $\{0,1\}^N$, where the reward is given by $r({\bm{\mathrm{0}}^0, \bm{\mathrm{1}}^N})=r_1$ and $r({\bm{\mathrm{0}}^{N-m},\bm{\mathrm{1}}^m})=-mr_2$ for all $m\neq N$, $r_1$, $r_2$ are positive real numbers. We suppose the initial policy is randomly and uniformly initialized, and the policy is optimized in the form of gradient descent. Let $p$ be the probability that the shared policy converges to the best policy, then:

    \[
        p=1-\sqrt[N-1]{\frac{r_2}{r_1+Nr_2}}.
    \]

\end{proposition}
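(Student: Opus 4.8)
The plan is to reduce everything to a one-dimensional analysis of the expected return. Under parameter sharing with a single state, the joint policy is pinned down by one scalar $q\in[0,1]$, the common probability each agent assigns to action $1$, and since the agents act independently given this shared parameter, the number of agents playing $1$ is $\mathrm{Binomial}(N,q)$. First I would write $J(q)=\sum_{m=0}^N\binom{N}{m}q^m(1-q)^{N-m}r(m)$ with $r(N)=r_1$, $r(0)=0$, and $r(m)=-mr_2$ for $1\le m\le N-1$, then collapse the negative block using $\sum_{m}m\binom{N}{m}q^m(1-q)^{N-m}=Nq$ (subtracting only the $m=N$ term) to obtain the closed form $J(q)=(r_1+Nr_2)\,q^N-Nr_2\,q$.

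Next I would analyze the landscape of $J$ on $[0,1]$. We have $J'(q)=N\big[(r_1+Nr_2)q^{N-1}-r_2\big]$, vanishing at the unique interior point $q^\ast=\big(\tfrac{r_2}{r_1+Nr_2}\big)^{1/(N-1)}\in(0,1)$ (the base lies in $(0,1)$ because $r_2<r_1+Nr_2$). Since $J''(q)=N(N-1)(r_1+Nr_2)q^{N-2}>0$ for $q>0$ and $N\ge2$, the function $J$ is strictly convex, so $q^\ast$ is its unique minimizer, $J'<0$ on $[0,q^\ast)$ and $J'>0$ on $(q^\ast,1]$, and $J$ attains its maximum over $[0,1]$ at an endpoint. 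Comparing $J(0)=0$ with $J(1)=r_1>0$ shows that $q=1$ (all agents play $1$) is the unique optimal policy.

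Then I would read off the gradient dynamics. Under gradient ascent on $J$ (equivalently gradient descent on $-J$), the update direction at $q_t$ has the sign of $J'(q_t)$, so with the monotone sign structure above an iterate starting strictly to the right of $q^\ast$ is pushed monotonically toward $1$ and one starting strictly to the left of $q^\ast$ toward $0$ (the safe-but-suboptimal all-zeros policy). Because the initialization is uniform on $[0,1]$ and $\{q_0=q^\ast\}$ is null, the probability of converging to the best policy equals $\Pr(q_0>q^\ast)=1-q^\ast=1-\sqrt[N-1]{\tfrac{r_2}{r_1+Nr_2}}$, which is the claim; for a monotone reparametrization such as softmax logits the basin boundary is just the image of $q^\ast$, so the structure is unchanged.

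The step I expect to be the main obstacle is making the discrete gradient-descent convergence rigorous: with a fixed step size one must exclude overshooting past $q^\ast$ or beyond the boundary of $[0,1]$, so I would either pass to the gradient-flow limit or take a sufficiently small (or diminishing) learning rate, or give an explicit invariant-interval argument via the Lipschitz bound on $J'$ over $[0,1]$. The binomial simplification, the convexity computation, and the uniform-initialization probability are otherwise routine. I would also flag that the statement tacitly assumes $N\ge2$: for $N=1$ there are no mixed outcomes, $J(q)=r_1q$, convergence to the optimum is certain, and the exponent $1/(N-1)$ degenerates.
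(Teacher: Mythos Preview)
Your proposal is correct and follows essentially the same route as the paper: parametrize by the shared probability $q$, collapse the reward expectation to $J(q)=(r_1+Nr_2)q^N-Nr_2q$ via the binomial mean identity, locate the unique critical point $q^\ast=\bigl(\tfrac{r_2}{r_1+Nr_2}\bigr)^{1/(N-1)}$, read off the sign of $J'$ on either side, and convert the uniform initialization into $p=1-q^\ast$. The only cosmetic differences are that the paper derives $\sum_m m\binom{N}{m}q^m(1-q)^{N-m}=Nq$ by differentiating the binomial theorem rather than citing the binomial expectation, and it does not spell out the convexity/endpoint comparison or the step-size and $N\ge2$ caveats you raise; these are refinements of rigor rather than a different argument.
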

Detailed proof is provided in Appendix~\ref{appendix 1}. In the above example, we show that the increase in the number of agents makes it more difficult for MARL algorithms to reach the optimal solution. 
However, based on the problem context, humans can understand problems from a high-level semantic perspective, and quickly find optimal solutions. As LLMs have demonstrated surprising abilities in semantic understanding, reasoning, and planning across various tasks~\cite{yuan2023plan4mc, wang2023voyager}, we conduct a simple experiment to test GPT-4's capability for the issue in Proposition~\ref{proposition 2}, and here is GPT-4's response:

\colorbox{lightgray}{%
  \begin{minipage}{\textwidth} 
    \texttt{\footnotesize In a fully cooperative game, all agents work together to maximize the total reward. 
    There are two distinct reward conditions: \begin{enumerate}  
    \item When all agents choose action 1, the reward is $r_1$, a
    positive real number.  
    \item When there is any number of agents $m$ (where $0<m<N$) choosing action 1, the reward is $-mr_2$, where $r_2$ is a positive real number.  
    \end{enumerate} 
    All agents should act in a way that avoids the negative reward scenario. The negative reward scenario happens anytime there is a mix of 0’s and 1's in the action space, which means some agents are choosing 1 and others are choosing 0. Therefore, the optimal joint action for all agents is to all choose 1.}
  \end{minipage}%
}

GPT-4 exhibits reasoning abilities on par with those of humans and directly solves the problem in Proposition~\ref{proposition 2}. Propositions~\ref{proposition 1} has already shown that an intelligent exploration function can not only reduce the searching space but also improve the final performance. 
This makes us think about applying the powerful LLMs to prune the redundant action space and thereby guide the exploration in MARL. In the following sections, we propose the \texttt{eSpark} framework, which integrates the prior knowledge and inferential capability of LLMs to boost the exploration in MARL.

\section{Method}\label{method}
In this section, we introduce a novel framework, \texttt{eSpark}, which integrates robust prior knowledge encapsulated in LLMs. It improves iteratively through a cycle of trial and error, leveraging the capability of LLMs.
Figure~\ref{eSpark_framework} illustrates the overall training procedure.
\texttt{eSpark} is composed of three components: (i) zero-shot generation of exploration functions, (ii) evolutionary search for best performing MARL policy, and (iii) detailed feedback on the performance of the policy to improve the generation of exploration functions. We show the pseudocode of $\texttt{eSpark} $in Appendix~\ref{pseudocode sec}.

\subsection{Exploration function generation}
LLMs have been demonstrated to possess exceptional capabilities in both code comprehension and generation. 
To this end, we employ a LLM as \textbf{LLM code generator}, denoted as $\texttt{LLM}_g$, whose task is to understand the objectives of the current environment, and output an exploration function:
\begin{wrapfigure}{r}{0.63\textwidth} 

  \centering  
  \includegraphics[width=0.64\textwidth]{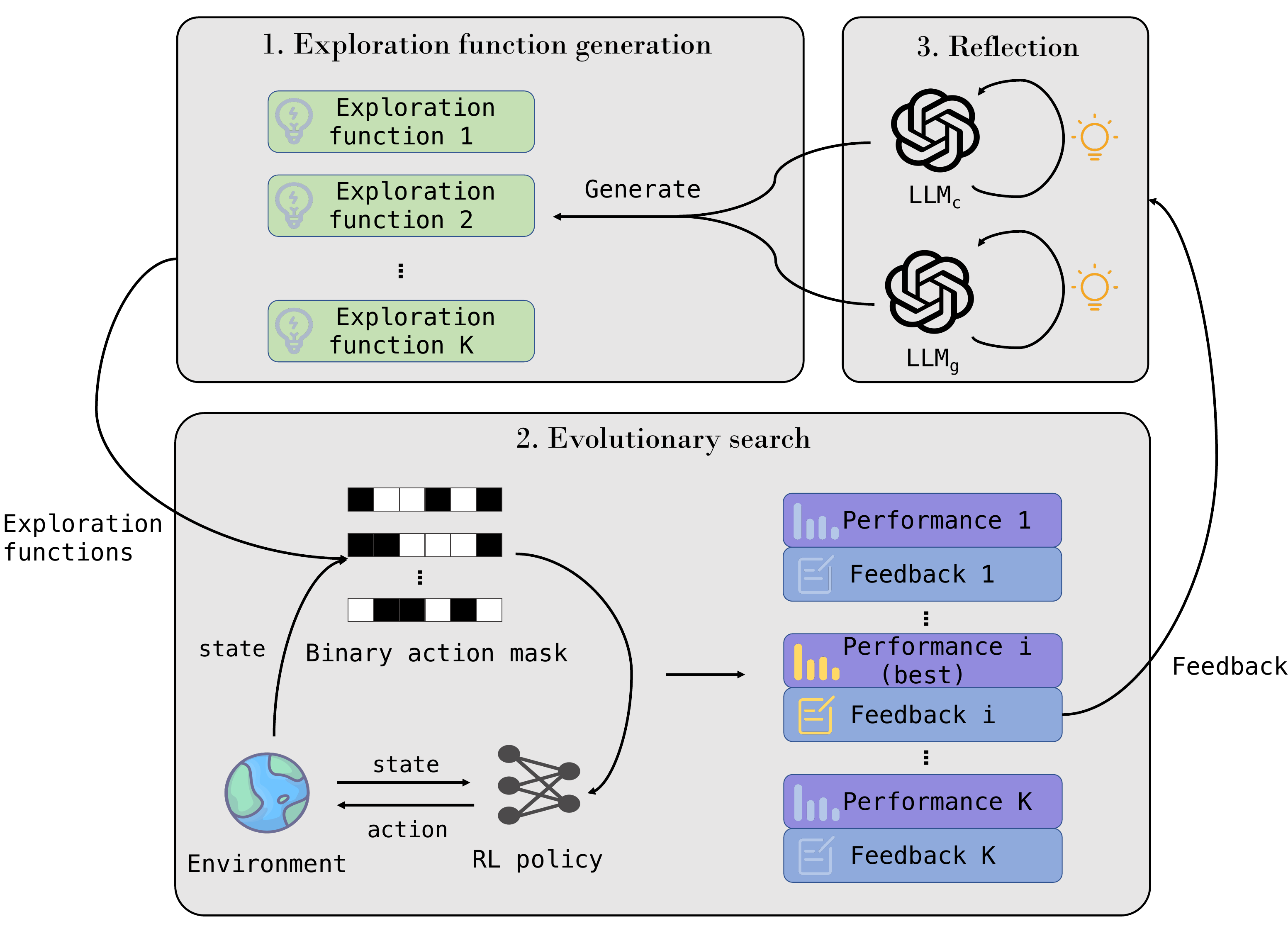}  
  \caption{\texttt{eSpark} firstly generates $K$ exploration functions via zero-shot creation. Each exploration function is then used to guide an independent policy, and the evolutionary search is performed to find the best-performing policy. Finally, \texttt{eSpark} reflects on the feedback from the best performance policy, refines, and regenerates the exploration functions for the next iteration.}
  \label{eSpark_framework}
\end{wrapfigure} 
\begin{equation}\label{eq1}
E_1, \dots, E_K \sim \texttt{LLM}_g(\texttt{prom}),
\end{equation}

where \texttt{prom} is the prompt for $\texttt{LLM}_g$, and the generation of $K$ exploration functions is to circumvent the suboptimality that may arise from single-sample generation.
The initial \texttt{prom} includes an \textit{RL formulation} describing the reward system, state items, transitions, and the action space, alongside a \textit{task description} that specifies the task objectives, expected outputs, and formatting rules. Details on the initial \texttt{prom} are provided in Appendix~\ref{full prompts}. We use code for the \textit{RL formulation} as it effectively captures the physical transition dynamics crucial to RL problems, which are always difficult to express precisely through the text alone, especially when environmental complexity increases. Code contexts also improve code generation and clarify environmental semantics and variable roles~\cite{ma2023eureka}.

During the code generation, however, $\texttt{LLM}_g$ may incorrectly interpret variables and produce logically flawed code. This kind of flawed logic could persist if it is added to the prompt context for the next generation. As research has shown that collaboration among multiple LLMs can enhance the quality and efficacy of the generated contents~\cite{chen2023agentverse,zhang2023controlling}, we introduce the $\textbf{LLM checker}$ denoted as $\texttt{LLM}_c$, which reviews $\texttt{LLM}_g$'s output to pursue an enhanced generation. $\texttt{LLM}_c$ uses the same prompt as $\texttt{LLM}_g$ but is prompted to focus on verifying the accuracy of code relative to environmental transitions and variable specifications. If inconsistencies are found, $\texttt{LLM}_c$ signals the error, prompting $\texttt{LLM}_g$ to regenerate the code. Finally, exploration functions are generated by:

\begin{equation}
    E_1, \dots, E_K \sim \texttt{LLM}_c\left(\texttt{prom}, \texttt{LLM}_g(\texttt{prom})\right).
\end{equation}

Exploration functions are applied only during the training phase to guide the exploration of MARL. During the execution phase, all exploration functions are removed.

\subsection{Evolutionary search} \label{Evolutionary Search}
During the generation, however, it should be noted that the initially generated exploration function may not always guarantee executability and effectiveness. To address this, \texttt{eSpark} performs an \textbf{evolutionary search} paradigm that selects the best-performing policy in one iteration and uses its feedback for subsequent generation~\cite{ma2023eureka}. 
Specifically, \texttt{eSpark} samples a batch of $K$ exploration functions in each generation to ensure there are enough candidates for successful code execution. 
Performance is assessed at regular checkpoints within an iteration, with the final evaluation based on the average of the last few checkpoints. 
The policy achieving the highest performance is selected, and the feedback obtained from this policy is integrated to optimize the exploration functions in the following steps.

Due to the dynamic nature of exploration, the exploration function generated based on feedback from the best-performing policy may not be applicable to other policies. As the proof of Proposition~\ref{proposition 1} demonstrates, when an exploration function is incapable of intelligently pruning, it may even impair the performance of the policy.
To this end, we utilize \textbf{retention training} to maintain continuity of exploration. 
Let $\phi_{\text{best}}^{i-1}$ represent best-performing policy from the $(i-1)$-th iteration. 
For the $i$-th iteration except for the first, at the start of the iteration, we set:
\begin{equation}
     \phi_1^i, \phi_2^i, \dots, \phi_K^i \leftarrow \phi_{\text{best}}^{i-1}.
\end{equation}
This allows us to match exploration functions with their corresponding policies, subsequently refining performance incrementally. We will verify the impact of retention training in Section~\ref{retention training result}. 

\subsection{Reflection and feedback}
Feedback from the environment can significantly enhance the quality of the generated output by LLMs~\cite{nascimento2023gpt, du2023guiding}. In \texttt{eSpark}, we leverage \textbf{policy feedback}, which contains the evaluation of policy performance from various aspects, to enhance the generation of LLMs.  This policy feedback may either come from experts or be automatically constructed from the environment, as long as it encompasses insights into the aspects where the current algorithm performs well and areas where it requires improvement. As illustrated in Equation~\ref{reflection equation}, by correlating the best-performing policy feedback $F_\text{best}$ and the most effective exploration function $E_{\text{best}}$, LLMs introspect, update the prompt $\texttt{prom}$, and gear up for the ensuing evolutionary cycle.
\begin{equation}\label{reflection equation}
\texttt{prom} \leftarrow \texttt{prom} : \texttt{Reflection}(E_{\text{best}}, F_{\text{best}}).
\end{equation}

In our experiments, we generate automated policy feedback from environmental reward signals, as domain experts in relevant fields are not available. We acknowledge that obtaining feedback from human experts can be expensive. Nevertheless, it is important to note that within our framework, the number of rounds for feedback collection is specified by a predefined hyperparameter, which is typically kept low (in our experiments, it is set to 10). Therefore, in scenarios where human experts are accessible, incorporating their insights is feasible and can potentially enhance performance.

\section{Experiments}

\subsection{Experiment settings}\label{experiment settings}
For a comprehensive evaluation of \texttt{eSpark}'s capabilities, we perform detailed validations within two distinct industrial environments: the inventory management environment MABIM and the traffic signal control environment SUMO.
\begin{itemize}
    \item \textbf{MABIM setting}: MABIM simulates multi-echelon inventory management by modeling each stock-keeping unit (SKU) as an agent, mirroring real-world operations and profits within the MARL framework. The total reward is composed of multiple reward components. We utilize the total reward to identify the best-performing policy, while those components evaluate the policy's multifaceted performance to generate policy feedback. 
    We focus on three key challenges within inventory management: multiple echelons, capacity constraints and scalability, selecting corresponding scenarios for experiments.

    \item \textbf{SUMO setting}: SUMO is a traffic signal control environment in which each intersection is represented as an agent. It offers a variety of reward functions, and we use "the number of stopped vehicles" as the reward for evolutionary search, while other rewards are for policy feedback. The Average Delay, Average Trip Time, and Average Waiting Time metrics are chosen for evaluation~\cite{lu2023dualight}. We employ GESA~\cite{jiang2024general} to standardize intersections into 4-arm configurations. Each simulation spans 3600 seconds, with decisions at 15-second intervals.
    
    \item \textbf{Model setting}: We use IPPO as the base MARL framework for \texttt{eSpark} due to its DTDE structure, which is suitable for large-scale challenges. 
    But note that our approach can also be applied as a plugin in other MARL methods.
    We select GPT-4 for the $\texttt{LLM}_c$ and $\texttt{LLM}_g$ due to its superior comprehension and generation abilities. For each scenario, we conduct three runs with a batch size of $K=16$, training for $10$ iterations.
\end{itemize}
All training jobs are executed with an Intel(R) Xeon(R) Gold 6348 CPU and 4 NVIDIA RTX A6000 GPUs. In Appendix~\ref{environment details}, we provide a detailed introduction and setting for the environments and model. In Appendix~\ref{baseline details}, we give hyperparameter configurations and descriptions of each baseline method.

\subsection{Experiment results} \label{experiment results}
In this section, we present the key findings for \texttt{eSpark} in the MABIM and SUMO, highlighting the best and second best results in \textbf{bold} and \underline{underline}.
More detailed results are presented in Appendix~\ref{additional results}.
\subsubsection{Performance on MABIM}
Table \ref{100 challenge res} shows the results in the 100 SKUs scenarios in terms of capability constraints and multiple echelon challenges. With IPPO as the base MARL algorithm, \texttt{eSpark} not only outperforms IPPO in all scenarios but also exceeds the performance of all compared baselines in 4 out of 5 scenarios. For an in-depth analysis, we discuss the policy differences between IPPO and \texttt{eSpark} in Appendix~\ref{performance comparasion}, along with \texttt{eSpark}'s reflective mechanism and exploration function adjustments in Appendix~\ref{eSpark improvement}. While IPPO struggles to learn the intricate interplay among SKUs, \texttt{eSpark} excels particularly in navigating cooperation among SKUs and refining its search in a broad space, leading to marked improvements in managing capacity constraints and multi-echelon coordination.

\begin{table*}[htb]
\caption{Performance in MABIM, a higher profit indicates a better performance. The "Standard" scenario features a single echelon with sufficient capacity. The "2/3 echelons"  involves challenges of multi-echelon cooperation. The "Lower/Lowest" scenarios are the challenges where SKUs compete for insufficient capacity, while "500 SKUs scenarios" assess scalability. The `-' indicates CTDE algorithms are not researched in the scalability challenges.}\vspace{3mm}

\centering
\scalebox{0.74}{
\begin{tabular}{ccccccccccc}  
\toprule

\multirow{3}*{\textbf{Method}}&\multicolumn{10}{c}{\textbf{Avg. profits (K)}}\\ 
 & \multicolumn{5}{c}{100 SKUs scenarios} & \multicolumn{5}{c}{500 SKUs scenarios}\\ 
  & Standard & 2 echelons  & 3 echelons & Lower& \multicolumn{1}{c}{Lowest}  & \multicolumn{1}{c}{Standard} & 2 echelons & 3 echelons  & Lower  & Lowest \\ \midrule
IPPO & 690.6 & 1412.5 & 1502.9 & 431.1 & \underline{287.6} & 3021.2 & 7052.0 & 7945.7 & 3535.9 & \underline{2347.4} \\  
QTRAN & 529.6 & 1595.3 & 2012.2 & 70.1 & 19.5 & - & - & - & - & - \\  
QPLEX & 358.9 & 1580.7 & 704.2 & 379.8 & 259.3 & - & - & - & - & - \\  
MAPPO & 719.8 & 1513.8 & 1905.4 & 478.3 & 265.8 & - & - & - & - & - \\  \midrule
BS static & 563.7 & \underline{1666.6} & 2338.9 & 390.7 & -1757.5 & 3818.5 & 8151.2 & \underline{11926.3} & 3115.1 & -9063.8 \\  
BS dynamic & 684.2 & 1554.2 & \underline{2378.2} & \textbf{660.6} & -97.1 & 4015.7 & 8399.3 & 11611.1 & \textbf{3957.5} & 2008.6 \\  
$(S,s)$ & \underline{737.8} & 1660.8 & 1725.2 & 556.9 & 203.7 & \underline{4439.4} & \textbf{9952.1} & 10935.7 & 3769.3 & 2212.4 \\  \midrule

eSpark & \textbf{823.7} & \textbf{1811.4} & \textbf{2598.7} & \underline{579.5} & \textbf{405.0} & \textbf{4468.6} & \underline{9437.3} & \textbf{12134.2} & \underline{3775.7} & \textbf{2519.5} \\  \bottomrule
\end{tabular}  

}\vspace{-3mm}
\label{100 challenge res}
\end{table*}

In Table~\ref{100 challenge res}, we also present the performance outcomes of the \texttt{eSpark} algorithm in the scaling-up 500 SKUs scenarios. Due to the centralized nature of the CTDE methods, they struggle to scale to large-scale problems and therefore are not presented in the table. Despite IPPO's markedly inferior performance on scenarios when problems scale up, \texttt{eSpark} exhibits significant enhancements and consistently achieves optimal results across multiple scenarios. We attribute this improvement to \texttt{eSpark}'s action space pruning strategy, which effectively addresses the heightened exploration needs in scenarios with many agents, providing a clear advantage in such complex environments.

\subsubsection{Performance on SUMO}
To further assess \texttt{eSpark}'s capabilities across different tasks, we have compiled a summary of results in Table~\ref{sumo result} based on the SUMO environment. Similar to the outcomes in MABIM, \texttt{eSpark} consistently enhances the performance of the IPPO in all scenarios, and it has outperformed the CTDE baselines as well as domain-specific MARL baselines to achieve the best performance. Notably, even when IPPO alone is capable of good results (as seen in scenarios such as Grid 4$\times$4 and Cologne8), the pruning method designed in \texttt{eSpark} does not compromise the effectiveness of IPPO. We will delve further into the analysis of exploration functions produced by \texttt{eSpark} in Section~\ref{intelligent exploration functions}.

\begin{table}[ht]  
\vspace{-1mm}
\caption{Performance in SUMO, including the mean and standard deviation (in parentheses). A lower time indicates a better performance.}  
\centering  
\scalebox{0.57}{
    \begin{tabular}{ccccccccccc}  
    \toprule
    \multirow{2}*{\textbf{Method}} & \multicolumn{5}{c}{\textbf{Avg. delay (seconds)}} & \multicolumn{5}{c}{\textbf{Avg. trip time (seconds)}} \\ 
    & Grid 4$\times$4 & Arterial 4$\times$4 & Grid 5$\times$5 & Cologne8 & \multicolumn{1}{c}{Ingolstadt21} & Grid 4$\times$4 & Arterial 4$\times$4 & Grid 5$\times$5 & Cologne8 & Ingolstadt21 \\ \midrule
    FTC&  161.14 {\scriptsize (3.77)} & 1229.68 {\scriptsize (16.79)} & 820.88 {\scriptsize (24.36)} & 85.27 {\scriptsize (1.21)} & \underline{224.96 {\scriptsize (11.91)}} & 291.48 {\scriptsize (4.45)} & 676.77 {\scriptsize (13.70)} & 584.54 {\scriptsize (24.17)} & 145.93 {\scriptsize (0.84)} & \underline{352.06 {\scriptsize (9.29)}} \\
    MaxPressure & 63.39 {\scriptsize (1.34)} & 995.23 {\scriptsize (77.02)} & 242.85 {\scriptsize (16.04)} & 31.63 {\scriptsize (0.61)} & 228.64 {\scriptsize (15.83)} & 174.68 {\scriptsize (2.05)} & 702.09 {\scriptsize (23.61)} & 269.35 {\scriptsize (9.62)} & 95.67 {\scriptsize (0.62)} & 352.30 {\scriptsize (15.06)} \\ \midrule
    IPPO & \underline{48.40 {\scriptsize (0.45)}} & 884.73 {\scriptsize (38.94)} & 228.78 {\scriptsize (11.59)} & 27.60 {\scriptsize (1.70)} & 342.97 {\scriptsize (43.61)} & 160.12 {\scriptsize (0.60)} & 506.18 {\scriptsize (10.39)} & \underline{238.03 {\scriptsize (7.10)}} & \underline{91.41 {\scriptsize (1.60)}} & 464.50 {\scriptsize (43.30)}\\
    MAPPO & 51.25 {\scriptsize (0.58)} & 958.43 {\scriptsize (181.72)} & 958.43 {\scriptsize (181.72)} & 32.55 {\scriptsize (4.66)} & 347.59 {\scriptsize (47.59)} & \underline{160.01 {\scriptsize (0.54)}} & 757.40 {\scriptsize (242.00)} & 247.56 {\scriptsize (3.71)} & 94.31 {\scriptsize (1.77)} & 480.66 {\scriptsize (49.46)} \\
    CoLight & 53.40 {\scriptsize (1.89)} & \textbf{820.67 {\scriptsize (58.65)}} & 339.66 {\scriptsize (48.55)} & \underline{27.48 {\scriptsize (1.30)}} & 296.47 {\scriptsize (106.82)} & 165.77 {\scriptsize (1.89)} & \textbf{470.33 {\scriptsize (12.34)}} & 305.41 {\scriptsize (44.43)}  & 91.66 {\scriptsize (1.29)} & 410.59 {\scriptsize (97.29)} \\
    MPLight & 63.51 {\scriptsize (0.64)} & 1142.98 {\scriptsize (79.65)} & \underline{223.44 {\scriptsize (16.18)}} & 37.93 {\scriptsize (0.45)} & \textbf{196.74 {\scriptsize (9.88)}} & 172.47 {\scriptsize (0.89)} & 583.21 {\scriptsize (45.84)} & 255.49 {\scriptsize (6.26)} & 110.56 {\scriptsize (1.18)} & \textbf{331.42 {\scriptsize (11.79)}}\\
    \midrule
    eSpark & \textbf{48.36 {\scriptsize (0.32)}} & \underline{854.22 {\scriptsize (68.21)}} & \textbf{209.49 {\scriptsize (13.98)}} & \textbf{25.39 {\scriptsize (1.27)}} & 243.92 {\scriptsize (15.81)} & \textbf{159.74 {\scriptsize (0.44)}} & \underline{484.87 {\scriptsize (58.21)}} & \textbf{235.20 {\scriptsize (6.80)}} & \textbf{89.50 {\scriptsize (1.36)}} & 367.57 {\scriptsize (15.03)}\\
    \bottomrule
    \end{tabular}  
}
\label{sumo result}  
\vspace{-2mm}
\end{table}

\subsection{\texttt{eSpark} learns intelligent pruning methods} \label{intelligent exploration functions}
Given that \texttt{eSpark} employs the prior knowledge of the LLMs to craft its exploration function, our study aimed to investigate two critical aspects: (1) the validity of action space pruning via prior knowledge, and (2) the potential advantages of this method over rule-based heuristic pruning.

To address the questions raised, we devise two pruning strategies. First, we implement a \textbf{random pruning} method, wherein agents randomly exclude a portion of actions during decision-making to test the validity of knowledge-based pruning. Secondly, we utilize domain-related OR algorithms to implement \textbf{heuristic pruning} methods. For MABIM, actions are pruned using the $(S, s)$ policy and unbound limit, while for SUMO, pruning relies on MaxPressure to keep only a few actions with the highest pressure. The details of these methods are presented in Appendix~\ref{Heuristic Baselines}. Just like \texttt{eSpark}, these pruning strategies are integrated with IPPO during training but not execution. We conducted experiments under the same setting in Section~\ref{experiment settings}, with results presented in Tables~\ref{mabim or mask result} and Table~\ref{sumo or mask result}.

\begin{table}[htbp]  
\centering  
 \vspace{-2mm} 
\begin{minipage}[t]{0.45\linewidth}  
\centering  
\caption{Average performance changes on MABIM. All changes are relative to IPPO.}  
\scalebox{0.8}{
\begin{tabular}{ccc}  
\toprule
\multirow{2}*{\textbf{Method}}&\multicolumn{2}{c}{\textbf{Avg. profits change ratio (\%)}}\\ 
 & 100 SKUs & 500 SKUs \\ \midrule
Random pruning & 2.1 & -0.5  \\  
$(S,s)$ pruning & -25.9 & 15.5 \\  
Upbound pruning & -23.2 & -32.7 \\  
eSpark & \textbf{39.1} & \textbf{29.7}  \\  \bottomrule
\end{tabular}  
}
\label{mabim or mask result}
\end{minipage}  
\hspace{0.5cm}  
\begin{minipage}[t]{0.45\linewidth}  
\caption{Average performance changes on SUMO. All changes are relative to IPPO.}  
\centering  
\scalebox{0.8}{
\begin{tabular}{cccc}  
\toprule
\multirow{2}*{\textbf{Method}}&\multicolumn{3}{c}{\textbf{Avg. time change ratio (\%)}}\\ 
 & Delay & Trip time & Wait time\\ \midrule
Random pruning & -0.1 & 2.2 & -2.5 \\  
MaxPressure pruning & -0.5 & 1.5 & -0.1 \\  
eSpark & \textbf{-9.7} & \textbf{-5.7} & \textbf{-14.3}  \\  \bottomrule
\end{tabular}  
}
\label{sumo or mask result}
\end{minipage}  
\vspace{-2mm}  
\end{table}  

As shown in the tables, random pruning marginally affects performance by merely altering exploration rates without providing new insights. Heuristic pruning's impact varies with its design and context. In MABIM, $(S,s)$ pruning is less effective in the 100 SKUs scenario, as it restricts the already effective IPPO's exploration in smaller scales. However, it proves beneficial in the 500 SKUs scenario, where it guides the exploration and leads to better results. Upbound pruning consistently underperforms due to its overly simplistic heuristic. For SUMO, pressure-based pruning does not offer significant benefits. Nevertheless, \texttt{eSpark} demonstrates remarkable adaptability across all testing tasks, adeptly selecting pruning methods that substantially enhance results. Its knowledge-based generative technique and evolution capability enable it to master intelligent pruning strategies.

\begin{figure*}[h]
\centering
\vspace{-2mm}
\includegraphics[width=0.95\textwidth,height=0.155\textwidth]{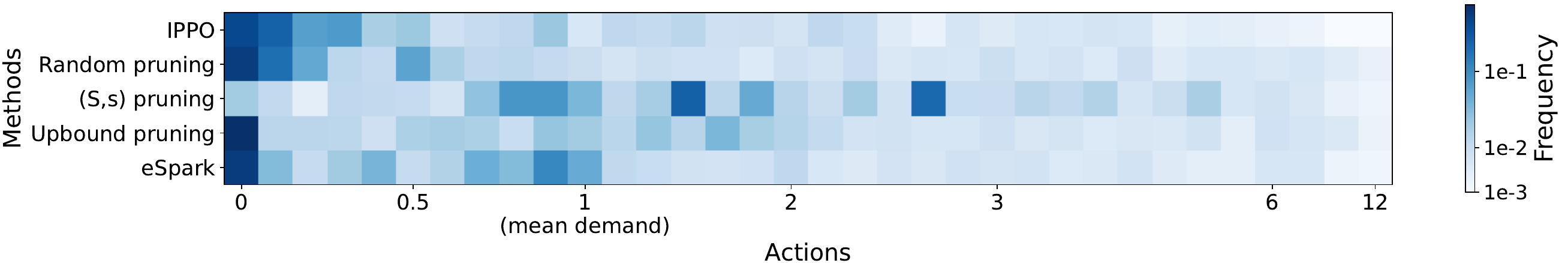}\vspace{-2mm}
\caption{Action selection frequency for IPPO and various pruning methods on the 100 SKUs Lowest scenario. "Actions" represents the replenishment quantity is a multiple of the mean demand within the sliding window. \texttt{eSpark} learns not only to minimize restocking but also to diversify with small purchases below the mean demand, balancing demand fulfillment and overflow prevention.}
\label{action freq heatmap}
\end{figure*}

Figure~\ref{action freq heatmap} presents a frequency heatmap of action selection for IPPO and various pruning methods in the 100 SKUs Lowest scenario. IPPO learns a minimally restocking strategy, risking unmet demand. Random pruning chooses actions more uniformly yet mirrors IPPO's pattern. $(S,s)$ pruning excessively exceeds mean demand, ignoring no-restock actions and leading to significant overflow. Upbound pruning typically avoids restocking, but prefers to purchase near the mean demand, which could result in overflow costs. In contrast, \texttt{eSpark} adopts a balanced policy, avoiding overstocking while diversifying its minor restocking strategies to meet demand without causing overflow.

\subsection{\texttt{eSpark} benefits from retention training and action space reduction}\label{retention training result} 

Extensive research has underscored the importance of reflection in LLM-driven content generation~\cite{ma2023eureka,nascimento2023gpt}. Herein, we focus on the effects of retention training and action pruning on \texttt{eSpark}'s performance.

We first design an ablation experiment, which we refer to as the {\textbf{eSpark w/o retention}}. The model parameters are initialized when an iteration is finished, and the newly generated exploration functions are equipped, after which the training starts from scratch. Given that the initialized model needs a more extensive number of steps to converge, we accordingly triple the training steps per iteration in comparison to the standard \texttt{eSpark}. Another ablation retains the retention training, while the only difference is that the LLMs and reflection are removed. We name this experiment \textbf{eSpark w/o LLM}. The comparative analysis of these two ablations is delineated in Table~\ref{100challengeresablation} and Table~\ref{sumo ablation}. 

\begin{table}[htbp]  
\centering  
  
\begin{minipage}[t]{0.45\linewidth}  
\centering  
\caption{Average performance change across 100 SKU scenarios in the MABIM environment. All changes are relative to IPPO.}  
\scalebox{0.8}{
\begin{tabular}{cc}  
\toprule
\multirow{1}*{\textbf{Method}}&\multicolumn{1}{c}{\textbf{Avg. profits change ratio (\%)}}\\ 
 \midrule
eSpark &  \textbf{39.1}  \\  
eSpark w/o retention& 24.0 \\  
eSpark w/o LLM &  -2.8 \\  \bottomrule
\end{tabular}  
}
\label{100challengeresablation}
\end{minipage}  
\hspace{0.5cm}  
\begin{minipage}[t]{0.45\linewidth}  
\caption{Average performance change in the SUMO environment. All changes are relative to IPPO. }  
\centering  
\scalebox{0.8}{
\begin{tabular}{cccc}  
\toprule
\multirow{2}*{\textbf{Method}}&\multicolumn{3}{c}{\textbf{Avg. time change ratio (\%)}}\\ 
 & Delay & Trip time & Wait time\\ \midrule
eSpark & \textbf{-9.7} & \textbf{-5.7} & \textbf{-14.3} \\  
eSpark w/o retention & -9.6 & -4.6 & -11.2 \\  
eSpark w/o LLM& -9.1 & -5.0 & -12.8 \\  \bottomrule
\end{tabular}  
}
\label{sumo ablation}
\end{minipage}  
  
\end{table}  
\vspace{-3mm}

The removal of retention training and LLMs both result in a decline in the performance of the \texttt{eSpark}. In the SUMO scenario, the performance gap between the two ablations and the complete \texttt{eSpark} is relatively small, whereas it is more pronounced in the MABIM scenarios. This can be attributed to the fact that MABIM involves a greater number of agents and a more complex observation space action space, where a superior pruning can significantly enhance the performance of MARL methods. Additionally, we observe that the lack of LLMs leads to a significant decrease in performance on MABIM, emphasizing the central role of knowledge-based action space pruning within the \texttt{eSpark}.

\section{Conclusions, limitations and future work}\label{conclusion}
We present \texttt{eSpark}, a novel framework for generating exploration functions, leveraging the advanced capabilities of LLMs to integrate prior knowledge, generate code and reflect, thereby refining the exploration in MARL. \texttt{eSpark} has surpassed its base MARL algorithm across all scenarios in both MABIM and SUMO environments. In terms of pruning strategies, pruning based on the prior knowledge from LLMs outshines both random and heuristic approaches. Ablation studies confirm the indispensable roles of retention training and action space reduction to \texttt{eSpark}'s success.

Nevertheless, \texttt{eSpark} also has certain limitations. First, currently \texttt{eSpark} is only applicable to tasks involving homogeneous agents. For heterogeneous agents, a potential method could be to generate distinct exploration functions for each agent; however, this approach becomes impractical when the number of agents is too large. Moreover, \texttt{eSpark} benefits from policy feedback to refine the exploration functions. When feedback is not informative regarding how to modify the exploration (e.g., in tasks with sparse rewards, end-of-episode feedback alone is too limited to develop automated feedback), \texttt{eSpark} may struggle to improve and need extra expert input for effective reflection.

Future work encompasses numerous potential directions. Existing research advocates for assigning different roles or categories to agents~\cite{pmlr-v139-christianos21a,wang2020roma}, which could offer a compromise for the application of \texttt{eSpark} in heterogeneous multi-agent systems. Furthermore, state-specific feedback for more granular improvement represents an intriguing avenue~\cite{subramanian2016exploration}. Our future endeavors will investigate these questions, striving to develop algorithms that are robust and exhibit strong generalizability.

\medskip

{
\small
\bibliographystyle{plain}  
\bibliography{example_paper}
\label{reference}


}

\newpage
\appendix

\section{Proofs} \label{appendix 1}
\setcounter{proposition}{0} 
\begin{proposition}
\begin{enumerate}
    \item[] 
    \item For any $E \in \{E\}$, $\{\bm{\pi}_E\}\subseteq\{\bm{\pi}\}$. If $E$ is non-trivial, then $\{\bm{\pi}_E\}\subset\{\bm{\pi}\}$.
    \item For any $\bm{\pi} \in \{\bm{\pi}\}$, there exists a non-trivial $E\in\{E\}$ such that $J(\bm{\pi}_{E}) \geq J(\bm{\pi})$.
\end{enumerate}
\end{proposition}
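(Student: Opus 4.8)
The plan is to prove the two parts separately, both directly from the definition of $\pi_E^k$ given in the preliminaries.

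For part 1, I would first show $\{\bm{\pi}_E\}\subseteq\{\bm{\pi}\}$ by observing that, for any fixed $E$ and any joint policy $\bm{\pi}=\prod_k \pi^k$, the guided policy $\bm{\pi}_E=\prod_k \pi_E^k$ is itself a valid joint policy: each $\pi_E^k(\cdot\mid o^k)$ is a nonnegative function on $\mathcal{A}^k$ that sums to $1$ (in the nondegenerate case it is an explicit renormalization; in the degenerate case it equals $\pi^k$, which is already a valid distribution), and a product of valid per-agent policies lies in $\{\bm{\pi}\}$. Hence $\bm{\pi}_E\in\{\bm{\pi}\}$. For the strict inclusion when $E$ is non-trivial, I would exhibit a joint policy that cannot arise as $\bm{\pi}_E$ for \emph{any} $\bm{\pi}$: since $E$ non-trivial means $E(\bar o^k,\bar a^k)=0$ for some agent $k$ and some pair $(\bar o^k,\bar a^k)$, consider the joint policy whose $k$-th component puts positive probability on $\bar a^k$ at observation $\bar o^k$ (e.g. the uniform policy). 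Whenever the denominator $\sum_{a^k}\pi^k(a^k\mid \bar o^k)E(\bar o^k,a^k)>0$, the formula forces $\pi_E^k(\bar a^k\mid \bar o^k)=\pi^k(\bar a^k\mid \bar o^k)\cdot E(\bar o^k,\bar a^k)/(\cdots)=0$; and when the denominator is $0$, $\pi_E^k=\pi^k$ but then the denominator being $0$ while $\pi^k$ puts mass on $\bar a^k$ would be a contradiction, so this branch is vacuous for the relevant $\pi^k$. Either way, no $\bm{\pi}_E$ can assign positive probability to $\bar a^k$ at $\bar o^k$, so the chosen policy is not in $\{\bm{\pi}_E\}$, giving strict containment.

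For part 2, given $\bm{\pi}\in\{\bm{\pi}\}$ I would construct a non-trivial $E$ that leaves $\bm{\pi}$ essentially unchanged along its own trajectories but is formally non-trivial. The cleanest choice: pick any agent $k$, any observation $\bar o^k$, and any action $\bar a^k$ with $\pi^k(\bar a^k\mid\bar o^k)=0$ (such a pair exists provided $\mathcal{A}^k$ has more than one action and $\pi^k(\cdot\mid\bar o^k)$ is not full support — and if $\pi^k(\cdot\mid\bar o^k)$ has full support everywhere, one can instead perturb: but the simplest uniform argument is to note that if \emph{every} per-agent conditional of $\bm{\pi}$ has full support, we may still set $E$ to zero out one pair and the renormalization only rescales, leaving the \emph{support} structure free to recover $J(\bm{\pi})$ up to the pruned mass). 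Define $E$ to equal $1$ everywhere except $E(\bar o^k,\bar a^k)=0$. Then for every observation the denominator $\sum_{a}\pi^k(a\mid o^k)E(o^k,a)$ is either $1$ (when $(o^k,\cdot)\neq(\bar o^k,\bar a^k)$-affected) or $1-\pi^k(\bar a^k\mid\bar o^k)$; renormalizing only redistributes the (small or zero) pruned probability mass onto the remaining actions. I would then argue $J(\bm{\pi}_E)\ge J(\bm{\pi})$: in the case $\pi^k(\bar a^k\mid\bar o^k)=0$ the two joint policies are literally identical, so $J(\bm{\pi}_E)=J(\bm{\pi})$, and $E$ is non-trivial by construction. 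To guarantee such a zero-probability pair always exists, I would first reduce to it: if $\bm{\pi}$ is already optimal there is nothing to do (take any non-trivial $E$ agreeing with $\bm{\pi}$'s support — or invoke that one can always prune a never-selected action); otherwise improvement is automatic. So the real content is just: \emph{some} non-trivial pruning preserves or improves $J$, and zeroing out a zero-probability action does exactly that.

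The main obstacle I anticipate is the edge case in part 2 where $\bm{\pi}$ assigns strictly positive probability to \emph{every} observation-action pair, so no "free" action can be pruned without changing the distribution. There, $J(\bm{\pi}_E)\ge J(\bm{\pi})$ is not automatic and I would need to choose $E$ more carefully — e.g. argue that among the finitely many single-pair prunings, at least one does not decrease $J$ (intuitively, pruning the least useful action at the least-visited observation cannot hurt on average, or more rigorously, if every single-pair pruning strictly decreased $J$ one derives a contradiction with $\bm{\pi}$ being expressible as a mixture / with optimality of the best deterministic response). Handling this cleanly — rather than hand-waving — is where I expect to spend most of the effort; everything else is routine manipulation of the renormalization formula.
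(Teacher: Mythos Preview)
Your Part~1 matches the paper's argument almost exactly: verify that each $\pi_E^k(\cdot\mid o^k)$ is a valid probability distribution (so $\{\bm{\pi}_E\}\subseteq\{\bm{\pi}\}$), then for strict inclusion exhibit a fully-supported policy that cannot lie in the image when $E$ is non-trivial. One small glitch in your case analysis for the degenerate branch: ``the denominator being $0$ while $\pi^k$ puts mass on $\bar a^k$ would be a contradiction'' is not quite the contradiction you want, since $E(\bar o^k,\bar a^k)=0$ kills that particular term regardless. What you actually need is that a \emph{uniform} $\pi^k$ puts mass on \emph{every} action, so the denominator vanishing would force $E(\bar o^k,a)=0$ for all $a$ --- i.e.\ you are implicitly assuming $E$ does not annihilate the entire action set at $\bar o^k$. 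The paper's proof carries the same implicit assumption.

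Your Part~2 is genuinely different from the paper, and the difference is located exactly at the obstacle you flag. The paper does not look for a zero-probability action to prune. Instead it introduces the advantage $A_{\bm{\pi}}(s,\mathbf{a})=Q_{\bm{\pi}}(s,\mathbf{a})-V_{\bm{\pi}}(s)$ and computes, for the joint mask $\mathbf{E}(s,\cdot)=\prod_k E(o^k,\cdot)$,
\[
V_{\bm{\pi}_E}(s)-V_{\bm{\pi}}(s)\;=\;-\,\frac{\sum_{\mathbf{a}}\bigl(1-\mathbf{E}(s,\mathbf{a})\bigr)\,\bm{\pi}(\mathbf{a}\mid s)\,A_{\bm{\pi}}(s,\mathbf{a})}{\sum_{\mathbf{a}}\mathbf{E}(s,\mathbf{a})\,\bm{\pi}(\mathbf{a}\mid s)},
\]
so the pruned policy is no worse at $s$ whenever the $\bm{\pi}$-weighted advantage of the masked actions is non-positive. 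Since $\sum_{\mathbf{a}}\bm{\pi}(\mathbf{a}\mid s)\,A_{\bm{\pi}}(s,\mathbf{a})=0$, such actions exist at every state, irrespective of whether $\bm{\pi}$ has full support. This is precisely the ``prune the least useful action'' intuition you sketch for the hard case, made concrete. Your route buys simplicity when $\bm{\pi}$ already has a zero-probability pair (then $\bm{\pi}_E=\bm{\pi}$ and nothing needs computing), but leaves the full-support case as acknowledged future work; the paper's advantage calculation handles both cases uniformly, at the price of the extra value-function machinery.
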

\begin{proof}
We begin by offering the proof of the first statement in Proposition~\ref{proposition 1}. We denote $\{\pi^k\}$ as the set of all possible policies for agent $k$, with each $\pi^k$ satisfying the following two conditions:
\begin{equation}
    \left\{\begin{matrix}
    \pi^k:\mathcal{O}^k\times\mathcal{A}^k\rightarrow [0,1] \\
    \sum_{a^k\in\mathcal{A}^k}\pi^k(a^k \mid  o^k)=1
    \end{matrix}\right.
\end{equation}
Let $\{\pi_E^k\}$ be the set of policies under an exploration function $E$. For every element in $\pi_E^k \in \{\pi_E^k\}$, one of the two situations exists:
\begin{enumerate}
    \item If $E$ is trivial, then $\pi_E^k(\cdot\mid o^k)=\pi^k(\cdot \mid  o^k)$, hence $\pi_E^k \in \{\pi^k\}$.
    \item If $E$ is non-trivial, then $\pi_E^k(\cdot\mid o^k)=\frac{\pi^k(\cdot \mid  o^k)\cdot E(o^k, \cdot)}{\sum_{a^k\in\mathcal{A}^k}\pi^k(a^k \mid  o^k)\cdot E(o^k, a^k)}$. It is clear that $ 0\leq\pi_E^k(\cdot\mid o^k)\leq 1$ and $\sum_{a^k\in\mathcal{A}^k}\pi_E^k(a^k \mid  o^k)=1$, thus $\pi_E^k \in \{\pi^k\}$.
\end{enumerate}

Therefore, we know that:
\begin{equation}\label{subset formula}
    \{\pi_E^k\} \subseteq \{\pi_k\}.
\end{equation}

Given that for $\forall \bm{\pi}(\cdot\mid s_t)=\prod_{k=1}^N\pi^k(\cdot\mid o^k)$, we have $\bm{\pi} \in \{\bm{\pi}\}$, where each $\pi^k$ belongs to $\{\pi^k\}$. According to Formula~\ref{subset formula}, it is known that for $\forall \boldsymbol{\pi}_E(\cdot\mid s_t) = \prod_{k=1}^N \pi_E^k(\cdot\mid o^k)$, it belongs to $\{\bm{\pi}\}$. Then:
\begin{equation}
    \{\boldsymbol{\pi}_E\} \subseteq \{\bm{\pi}\}.
\end{equation}
When $E$ is non-trivial, $\pi_E^k \in \{\pi^k\}$ still holds, but $\pi^k \in \{\pi_E^k\}$ may not be true (i.e., when $\pi^k(a^k\mid o^k)>0$ for $\forall a^k \in \mathcal{A}$, $\pi^k \notin \{\pi_E^k\}$). Hence we can get:
\begin{equation}
    \{\pi_E^k\} \subset \{\pi_k\}.
\end{equation}

In a similar manner, we can deduce that if there exists a $k\in [1,2,\dots, N]$ such that $\pi^k \notin \{\pi_E^k\}$, then $\boldsymbol{\pi} \notin \{\bm{\pi}_E\}$, which means:
\begin{equation}
    \{\boldsymbol{\pi}_E\} \subset \{\bm{\pi}\}.
\end{equation}
Therefore, we finish the proof of the first statement.

To proof the second statement, it is necessary to introduce a series of variables. We define the value function and state-action function for $\bm{\pi}$ as follows: $V_{\bm{\pi}}(s)=\mathbb{E}_{\textbf{a}_{0:\infty}\sim \bm{\pi}, s_{1:\infty}\sim P}
    \left[\sum_{t=0}^\infty {\rm \gamma^t r_t\mid s_0=s}\right]$ and $Q_{\bm{\pi}}(s,a)=\mathbb{E}_{\textbf{a}_{1:\infty}\sim \bm{\pi}, s_{1:\infty}\sim P}
    \left[\sum_{t=0}^\infty {\rm \gamma^t r_t\mid s_0=s, \textbf{a}_0=\textbf{a} }\right]$. 
    The advantage function is defined as $A_{\bm{\pi}}(s,\textbf{a})=Q_{\bm{\pi}}(s,\textbf{a})-V_{\bm{\pi}}(s)$. The joint exploration function is introduced as $\textbf{E}(s,\cdot)=\prod_{k=1}^N E(o^k,\cdot)$. The relationship between $V_{\bm{\pi}}(s)$ and $Q_{\bm{\pi}}(s,\textbf{a})$ can be formulated as:
\begin{equation}
    V_{\bm{\pi}}(s) = \sum_{\textbf{a}\in\boldsymbol{\mathcal{A}}^k}\bm{\pi}(\textbf{a} \mid s) Q_{\bm{\pi}}(s,\textbf{a})
\end{equation}

For a non-trivial $E$, the value function of $\bm{\pi}_E$ can be written as:
\begin{equation}
\begin{aligned}
    V_{\bm{\pi}_E}(s) &= \sum_{\textbf{a}\in\boldsymbol{\mathcal{A}}^k} \frac{\textbf{E}(s,\textbf{a})\bm{\pi}(\textbf{a} \mid s)}{\sum_{\textbf{a}\in\boldsymbol{\mathcal{A}}^k}\textbf{E}(s,\textbf{a})\bm{\pi}(\textbf{a}\mid s)}  Q_{\bm{\pi}}(s,\textbf{a})\\
    &= \frac{1}{\sum_{\textbf{a}\in\boldsymbol{\mathcal{A}}^k}\textbf{E}(s,\textbf{a})\bm{\pi}(\textbf{a}\mid s)} \sum_{\textbf{a}\in\boldsymbol{\mathcal{A}}^k} \textbf{E}(s,\textbf{a})\bm{\pi}(\textbf{a} \mid s) Q_{\bm{\pi}}(s,\textbf{a}) \\
    &= \frac{1}{\sum_{\textbf{a}\in\boldsymbol{\mathcal{A}}^k}\textbf{E}(s,\textbf{a})\bm{\pi}(\textbf{a}\mid s)} \left[V_{\bm{\pi}}(s) - \sum_{\textbf{a}\in\boldsymbol{\mathcal{A}}^k}\left(1-\textbf{E}(s, \textbf{a})\right)\bm{\pi}(\textbf{a} \mid s) Q_{\bm{\pi}}(s,\textbf{a}) \right].
\end{aligned}
\end{equation}

Thus, we have:
\begin{equation} \label{VE-V}
    \begin{aligned}
        V_{\bm{\pi}_E}(s) - V_{\bm{\pi}}(s) &= \frac{ V_{\bm{\pi}}(s) - \sum_{\textbf{a}\in\boldsymbol{\mathcal{A}}^k}\left(1-\textbf{E}(s, \textbf{a})\right)\bm{\pi}(\textbf{a} \mid s) Q_{\bm{\pi}}(s,\textbf{a})}{\sum_{\textbf{a}\in\boldsymbol{\mathcal{A}}^k}\textbf{E}(s,\textbf{a})\bm{\pi}(\textbf{a}\mid s)} - V_{\bm{\pi}}(s)\\
        &= \frac{ \left(1- \sum_{\textbf{a}\in\boldsymbol{\mathcal{A}}^k}\textbf{E}(s,\textbf{a})\bm{\pi}(\textbf{a}\mid s) \right) V_{\bm{\pi}}(s) -\sum_{\textbf{a}\in\boldsymbol{\mathcal{A}}^k}\left(1-\textbf{E}(s, \textbf{a})\right)\bm{\pi}(\textbf{a} \mid s) Q_{\bm{\pi}}(s,\textbf{a})}{\sum_{\textbf{a}\in\boldsymbol{\mathcal{A}}^k}\textbf{E}(s,\textbf{a})\bm{\pi}(\textbf{a}\mid s)}\\
        &=\frac{ \sum_{\textbf{a}\in\boldsymbol{\mathcal{A}}^k}\left(1-\textbf{E}(s, \textbf{a})\right)\bm{\pi}(\textbf{a} \mid s) V_{\bm{\pi}}(s) -\sum_{\textbf{a}\in\boldsymbol{\mathcal{A}}^k}\left(1-\textbf{E}(s, \textbf{a})\right)\bm{\pi}(\textbf{a} \mid s) Q_{\bm{\pi}}(s,\textbf{a})}{\sum_{\textbf{a}\in\boldsymbol{\mathcal{A}}^k}\textbf{E}(s,\textbf{a})\bm{\pi}(\textbf{a}\mid s)}\\
        &=-\frac{\sum_{\textbf{a}\in\boldsymbol{\mathcal{A}}^k}\left(1-\textbf{E}(s, \textbf{a})\right)\bm{\pi}(\textbf{a} \mid s) A_{\bm{\pi}}(s,\textbf{a})}{\sum_{\textbf{a}\in\boldsymbol{\mathcal{A}}^k}\textbf{E}(s,\textbf{a})\bm{\pi}(\textbf{a}\mid s)}.
    \end{aligned}
\end{equation}
When $-\sum_{\textbf{a}\in\boldsymbol{\mathcal{A}}^k}\left(1-\textbf{E}(s, \textbf{a})\right)\bm{\pi}(\textbf{a} \mid s) A_{\bm{\pi}}(s,\textbf{a}) \geq 0$, which means the expectation of the advantage value for the pruned actions is less than or equal to 0, then $V_{\bm{\pi}_E}(s) \geq V_{\bm{\pi}}(s)$. Because for every $s \in \mathcal{S}$, $\sum_{\textbf{a}\in\boldsymbol{\mathcal{A}}^k}A_{\bm{\pi}}(s,\textbf{a}) = \sum_{\textbf{a}\in\boldsymbol{\mathcal{A}}^k} Q_{\bm{\pi}}(s,\textbf{a})-V_{\bm{\pi}}(s) =0 $, there always exist actions for which the advantage function values are less than or equal to zero.

As $J(\bm{\pi}_{E})-J(\bm{\pi})=\mathbb{E}_{s_0\sim \rho^0}\left[V_{\bm{\pi}_E}(s_0) - V_{\bm{\pi}}(s_0) \right]$, if an exploration function $E$ can satisfy the condition that for all states $s\in \mathcal{S}$, the inequality $-\sum_{\textbf{a}\in\boldsymbol{\mathcal{A}}^k}\left(1-\textbf{E}(s, \textbf{a})\right)\bm{\pi}(\textbf{a} \mid s) A_{\bm{\pi}}(s,\textbf{a}) \geq 0$ holds, then it can be guaranteed that $J(\bm{\pi}_{E})\geq J(\bm{\pi})$. 

Therefore, we finish the proof of the second statement.
\end{proof}
\begin{proposition}
Let's consider a fully cooperative game with N agents, one state, and the joint action space $\{0,1\}^N$, where the reward is given by $r({\bm{\mathrm{0}}^0, \bm{\mathrm{1}}^N})=r_1$ and $r({\bm{\mathrm{0}}^{N-m},\bm{\mathrm{1}}^m})=-mr_2$ for all $m\neq N$, $r_1$, $r_2$ are positive real numbers. We suppose the initial policy is randomly and uniformly initialized, and the policy is optimized in the form of gradient descent. Let $p$ be the probability that the shared policy converges to the best policy, then:
    
\begin{equation}
    p=1-\sqrt[N-1]{\frac{r_2}{r_1+Nr_2}}.
\end{equation}
\end{proposition}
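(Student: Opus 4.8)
The plan is to set up the gradient-descent dynamics on the shared single-agent policy explicitly, observe that the dynamics is one-dimensional, and then compute the probability that the random initialization lies in the basin of attraction of the all-ones fixed point. Since all $N$ agents share parameters and there is a single state, the policy is fully described by one scalar $x \in (0,1)$, the probability that the shared policy assigns to action $1$. With uniform random initialization, $x_0 \sim \mathrm{Unif}(0,1)$, so the sought probability $p$ equals the Lebesgue measure of the set of initial values $x_0$ from which gradient ascent on the expected reward $J(x)$ converges to $x=1$.

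First I would write down $J(x)$: the joint action is $N$ i.i.d. Bernoulli$(x)$ draws, so $J(x) = r_1 x^N + \sum_{m=0}^{N-1}\binom{N}{m}x^m(1-x)^{N-m}(-m r_2)$. Using $\mathbb{E}[m] = Nx$ over all outcomes and peeling off the $m=N$ term, this simplifies to $J(x) = r_1 x^N - N r_2 x + N r_2 x^N = (r_1 + N r_2)x^N - N r_2 x$. Then I would differentiate: $J'(x) = N(r_1 + N r_2)x^{N-1} - N r_2$, which has a single root in $(0,1)$, namely $x^* = \bigl(r_2/(r_1+N r_2)\bigr)^{1/(N-1)}$. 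Since $J'$ is negative for $x < x^*$ and positive for $x > x^*$, the point $x^*$ is an unstable equilibrium of the gradient-ascent flow $\dot x = J'(x)$: trajectories starting with $x_0 > x^*$ flow up to $1$ (the best policy, with reward $r_1$), and those with $x_0 < x^*$ flow down to $0$. Hence the basin of the optimum is exactly the interval $(x^*, 1)$, of length $1 - x^*$, and since $x_0$ is uniform on $(0,1)$,
\[
p = 1 - x^* = 1 - \sqrt[N-1]{\frac{r_2}{r_1 + N r_2}},
\]
as claimed.

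The main obstacle is not the calculus but making the modeling assumptions precise enough that the one-dimensional reduction is legitimate: one must argue that "the shared policy" really is a single scalar (parameter sharing across identical agents with one state), that "gradient descent" here means following the continuous-time gradient flow of the expected return (or a small-step discretization that preserves the sign structure), and that "converges to the best policy" means convergence to $x=1$ rather than to the other stable fixed point $x=0$. I would also note the measure-zero event $x_0 = x^*$ (and, pedantically, $x_0 \in \{0,1\}$) does not affect the probability. Once these conventions are fixed, the rest is the short computation above; the only place to be careful in the algebra is correctly evaluating $\sum_{m=0}^{N}\binom{N}{m}x^m(1-x)^{N-m}m = Nx$ and remembering to add back the $+N r_2 x^N$ term that compensates for having included $m=N$ in that sum with the wrong coefficient.
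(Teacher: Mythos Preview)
Your proposal is correct and follows essentially the same route as the paper: reduce to a single scalar parameter $x=\Pr[\text{action }1]$, compute $J(x)=(r_1+Nr_2)x^N-Nr_2x$, locate the unique interior critical point $x^*=\bigl(r_2/(r_1+Nr_2)\bigr)^{1/(N-1)}$, observe the sign of $J'$ on each side, and read off $p=1-x^*$ from the uniform initialization. The only cosmetic difference is that you invoke $\mathbb{E}[m]=Nx$ directly for the binomial mean, whereas the paper derives the same identity by differentiating the binomial expansion; both yield the identical simplification.
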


\begin{proof}
    Clearly, the best policy is the deterministic policy with joint action $({\bm{\mathrm{0}}^0, \bm{\mathrm{1}}^N)}$. 

Now, let the shared policy be $(1-\theta, \theta)$, where $\theta$ is the probability that an agent takes action 1. The expected reward can be written as:
\begin{equation}
\begin{aligned}
    J(\theta)&=\bm{\mathrm{Pr}}\left(\bm{a}^{1:N}=({\bm{\mathrm{0}}^0, \bm{\mathrm{1}}^N)}\right)\cdot r_1 - \sum_{t=0}^{N-1}\bm{\mathrm{Pr}}\left(\bm{a}^{1:N}=({\bm{\mathrm{0}}^{N-t}, \bm{\mathrm{1}}^t)}\right)\cdot t \cdot r_2\\
    &= \theta^N\cdot r_1 - \sum_{t=0}^{N-1} t\cdot C_N^t\theta^t(1-\theta)^{N-t} \cdot r_2\\
    &= \theta^N\cdot r_1 - \sum_{t=0}^{N} t\cdot C_N^t\theta^t(1-\theta)^{N-t}\cdot r_2 + N\cdot\theta^N\cdot r_2,
\end{aligned}
\label{initial J}
\end{equation}

where $C_N^t$ is a combinatorial number. We need to simplify $\sum_{t=0}^{N} t\cdot C_N^t\theta^t(1-\theta)^{N-t}$ for further analysis. Notice the structural similarity between the results and the binomial theorem:
\begin{equation}
    \left((1-\theta)+\theta\right)^N = \sum_{t=0}^NC_N^t\theta^t(1-\theta)^{N-t}.
    \label{binominal theroem}
\end{equation}
We take the derivative of $\theta$ on both sides of Formula~\ref{binominal theroem}. Because the left side is constant, its derivative is 0. Then:
\begin{equation}
\begin{aligned}
    0 &= \frac{d\sum_{t=0}^NC_N^t\theta^t(1-\theta)^{N-t}}{d\theta}\\
    &=\sum_{t=0}^NC_N^tt\cdot\theta^{t-1}\cdot(1-\theta)^{N-t}+C_N^t(N-t)\cdot (-1)\cdot(1-\theta)^{N-t-1}\cdot\theta^t\\
    &=\sum_{t=0}^NC_N^t(1-\theta)^{N-t-1}\theta^{t-1}\left((1-\theta)t-(N-t)\theta\right)\\
    &=\sum_{t=0}^NC_N^t(1-\theta)^{N-t-1}\theta^{t-1}(t-N\theta).
\end{aligned}
\end{equation}
Thus, we have:
\begin{equation}
    \begin{aligned}
        N\theta\sum_{t=0}^NC_N^t(1-\theta)^{N-t-1}\theta^{t-1}&=\sum_{t=0}^NtC_N^t(1-\theta)^{N-t-1}\theta^{t-1}\\
        N\theta\sum_{t=0}^NC_N^t(1-\theta)^{N-t}\theta^{t}&=\sum_{t=0}^NtC_N^t(1-\theta)^{N-t}\theta^{t}.
    \end{aligned}
\end{equation}
Notice that the left side of the equation is the expansion form of Formula~\ref{binominal theroem}, and the right side of the equation is the desired Formula, we can get:
\begin{equation}
    \sum_{t=0}^NtC_N^t(1-\theta)^{N-t}\theta^{t} = N\theta.
    \label{simple result}
\end{equation}

Bring Formula~\ref{simple result} back to Formula~\ref{initial J}, we get:
\begin{equation}
    J(\theta)=\theta^N\cdot r_1 -N\theta r_2 + N\cdot \theta^N\cdot r_2.
    \label{before deri}
\end{equation}
In order to maximise $J(\theta)$, we must maximise $\theta^N\cdot (r_1+Nr_2) -N\theta r_2$. Since the policy optimization usually adopts a gradient manner, we calculate the derivative of Formula~\ref{before deri} with respect to $\theta$ as:
\begin{equation}
    \frac{dJ(\theta)}{d\theta}=N\theta^{N-1}(r_1+Nr_2)-Nr_2.
\end{equation}

the point $\theta^*=\sqrt[N-1]{\frac{r_2}{r_1+Nr_2}}$ is the only zero of $\frac{dJ(\theta)}{d\theta}$. When $\theta\leq\theta^*$, $\frac{dJ(\theta)}{d\theta}\leq 0$; $\theta\geq\theta^*$, $\frac{dJ(\theta)}{d\theta}\geq 0$.

Remember we are trying to maximize $J(\theta)$ through a gradient way, and then the policy improves the parameters in the direction of the gradient. As the initial policy is randomly and uniformly initialized, the $\theta$ is uniformly distributed in the interval {\rm [0,1]}, then the probability that the shared policy converges to the best policy is:
\begin{equation}
    p=1-\sqrt[N-1]{\frac{r_2}{r_1+Nr_2}}.
\end{equation}
	Therefore, we finish the proof of Proposition~\ref{proposition 2}.
\end{proof}

\section{Pseudocode}\label{pseudocode sec}
In this Section, we give the pseudocode of our proposed \texttt{eSpark}. We denote the performance of policy $i$ as $G_i$, and the pseudocode of \texttt{eSpark} is shown in Algorithm~\ref{pseudocode}.
\begin{algorithm}[h]
\caption{\texttt{eSpark}} 
\begin{algorithmic}[1]  
\label{pseudocode}
\STATE \textbf{Input:} Initial prompt $\texttt{prom}$, LLM checker $\texttt{LLM}_c$, LLM code generator $\texttt{LLM}_g$, the evolution iteration number $N$, and sample batch size $K$

\STATE \textbf{Initialize:} policies $\phi_1^1, \phi_2^1, \dots, \phi_K^1$ 
\FOR{$i = 1$ \textbf{to} $N$}  
    \STATE // Exploration Function Generation
    \STATE $E_1, \dots, E_K \sim \texttt{LLM}_{c}(\texttt{prom},\texttt{LLM}_{g}(\texttt{prom}))$
    \STATE // Retention training
    \IF{$i \neq 1$}  
        \STATE $\phi_1^i, \phi_2^i, \dots, \phi_K^i \leftarrow \phi_{\text{best}}^{i-1}$   
    \ENDIF  
    \STATE // Evolutionary search
    \STATE $G_1, F_1 = \phi(E_1), \dots, G_K, F_K = \phi(E_K)$  
    \STATE // Reflection and Feedback
    \STATE $\text{best} \leftarrow \arg\max_k(G_1, G_2, \dots, G_K)$  
    \STATE $\texttt{prom} \leftarrow \texttt{prom}:\texttt{Reflection}(E_{\text{best}}, F_{\text{best}})$ 
\ENDFOR  
\STATE \textbf{Output:} $\phi_{best}^N$
\end{algorithmic}  
\end{algorithm}

\section{Detailed settings} \label{environment details} 
\subsection{MABIM details}
MABIM is a simulation environment dedicated to leveraging MARL to tackle the challenges inherent in inventory management problems. Within MABIM, each stock SKU at every echelon is represented as an autonomous agent. The decision-making process of each agent reflects the procurement requirements for the specific SKU at its corresponding echelon. 

\begin{figure}[h]
\centering
\includegraphics[width=0.65\textwidth]{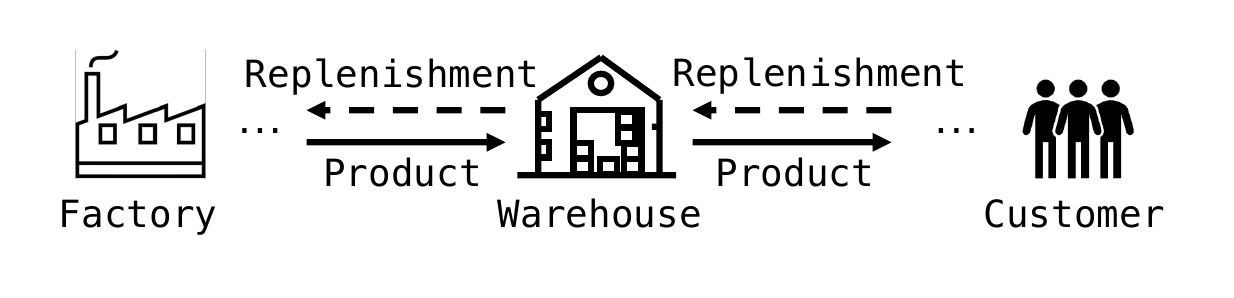}\vspace{-5mm}
\caption{MABIM inventory model.}
\end{figure}
Each time step involves the agent making decisions regarding replenishment quantities for SKUs and subsequently transitioning the environment to a new state. Let $M\in\mathbb{Z}^+$ be the total warehouses, with the first one being closest to customers, and $N\in\mathbb{Z}^+$ the total SKUs. Given a variable $X\in\{D,S,L\ldots\}$, $X_{i,j}^t$ represents its value for the $j$-th SKU in the $i$-th echelon at step $t$, with $0\le i<M$ and $0\le j<N$. Given the above notations, the main progression of a step can be described as follows:
\begin{equation*}
\begin{aligned}
    \centering
    & D_{i+1,j}^{t+1} = R_{i,j}^{t} & \text{(Replenish)} \\
    & S_{i,j}^t = \min(D_{i,j}^t, I_{i,j}^t) & \text{(Sell)}\\
    & A_{i,j}^{t} = \sum_{k=0}^{t-1} \mathbb{I}(k+L_{i,j}^k==t) \cdot S_{i+1,j}^t & \text{(Arrive)}\\ 
    & \gamma_i^t = \min\left(\frac{W_i - \sum_{j} I_{i,j}^t}{\sum_{j} A_{i,j}^t}, 1\right), B_{i,j}^t = \lfloor A_{i,j}^t \cdot \gamma_i^t \rfloor & \text{(Receive)} \\
    & I_{i,j}^{t+1} = I_{i,j}^t - S_{i,j}^t + B_{i,j}^t & \text{(Update)}\\
\end{aligned}
\end{equation*}
Here, $D, R, S, I, A, B \in \mathbb{Z}^+$ and $\mathbb{I}(\text{condition})$ is an indicator function that returns 1 if the condition is true, and 0 otherwise.
For the topmost echelon, orders are channeled to a super vendor capable of fulfilling all order demands at that level. Orders from other echelons are directed to their immediate upstream echelons, where the demands are satisfied based on the inventory levels of the upper echelons. The demand at the bottom echelon is derived from actual customer orders captured within real-world data sets. The reward function within MABIM is meticulously calibrated based on the economic realities of inventory management, integrating five fundamental elements: sales profit, order cost, holding cost, backlog cost, and excess cost. The summation of these elements constitutes the reward value, thereby incentivizing agents to optimize inventory control for enhanced profitability and operational efficacy.

MABIM incorporates challenges across five key categories: Scaling up, Cooperation, Competition, Generalization, and Robustness. We concentrate on the challenges associated with Scaling up, Cooperation, and Competition, as these challenges not only manifest in inventory management problems but also exist in a broad range of MARL tasks. We catalog the number of agents, challenges and degrees of difficulty within all the experimental scenarios in Table~\ref{challenge extent}. The specific setting of each scenario is given in Table~\ref{parameter in different challenge}:
\begin{table*}[h]
\caption{Tasks and corresponding challenges.  `\texttt{+}' denotes the extent of the challenges.}
\label{challenge extent}
\centering
\scalebox{1}{
\begin{tabular}{cccccccccc}
\toprule
 \multirow{2}*{Task name} &\multirow{2}*{Agents number}&\multicolumn{3}{c}{Challenge}\\ 
& & Scaling up& Cooperation & Competition & \\ \midrule
 Standard (100 SKUs)& 100& & &  \\
 2 echelons (100 SKUs)& 200& &\texttt{+} & \\
 3 echelons (100 SKUs)& 300& &\texttt{++} & \\
 Lower capacity(100 SKUs) &100  & & & \texttt{+}\\
 Lowest capacity (100 SKUs)&100 & & & \texttt{++}\\ \midrule
  Standard (500 SKUs)&500 &\texttt{+} & & \\
 2 echelons (500 SKUs)& 1000&\texttt{+} & \texttt{+} &\\
3 echelons (500 SKUs)& 1500&\texttt{+} & \texttt{++} &\\
 Lower capacity (500 SKUs)& 500&\texttt{+} & &\texttt{+} \\
 Lowest capacity (500 SKUs)& 500&\texttt{+} & & \texttt{++}\\
 
 \bottomrule\\
\end{tabular}}
\end{table*}

\begin{table}[ht]
    
    \caption{Experiments settings. "\#SKU * N" indicates N times the number of SKUs.}
    \label{parameter in different challenge}
    \centering
    \scalebox{1}{
        \begin{tabular}{ccccc}
            \toprule
            Task name & \#Echelon & \#SKU & Capacity per echelon  \\
            \midrule
            Standard (100 SKUs)& 1 & 100 & \#SKU*100 \\
             2 echelons (100 SKUs)& 2 & 100& \#SKU*100 & \\
 3 echelons (100 SKUs)& 3& 100&\#SKU*100\\
 Lower capacity(100 SKUs) &1 & 100&\#SKU*50  \\
 Lowest capacity (100 SKUs)&1 & 100& \#SKU*25 \\ 
  Standard (500 SKUs)&1 &500 & \#SKU*100\\
 2 echelons (500 SKUs)& 2&500 & \#SKU*100\\
3 echelons (500 SKUs)& 3&500 & \#SKU*100\\
 Lower capacity (500 SKUs)& 1&500 & \#SKU*50 \\
 Lowest capacity (500 SKUs)& 1&500&  \#SKU*25\\
            \bottomrule
        \end{tabular}
    }
\end{table}
For each scenario, we carry out three independent runs. Performance is reported as average test set profits from the top model in each run.
\subsection{SUMO details}
SUMO is an open-source, highly portable, microscopic and continuous road traffic simulation package designed to handle large road networks. In the SUMO simulation environment, each intersection is conceptualized as an autonomous agent equipped with an array of predefined traffic signal phases. These phases orchestrate the traffic flow across the intersection's multiple approaches. The selection of these phases, driven by the assessment of live traffic conditions, is aimed at attenuating road congestion and enhancing the fluidity of vehicular movement through the network, thus contributing to the overall efficiency of urban traffic management.

To conduct a thorough evaluation of each algorithm, we select a total of five scenarios from both synthetic and real-world datasets. These datasets encompass a diverse array of intersections, varying in number and type. The intersections are classified according to their configuration into three categories: bi-directional (2-arm), tri-directional (3-arm), and quadri-directional (4-arm), indicating the number of exit points at each junction. We summarize the type of each dataset, the number of intersections included, and the classification of these intersections in Table~\ref{sumo statistics}.

\begin{table}[ht]  
\centering  
\caption{The categories of each SUMO dataset, along with the number and types of intersections included.}  
\begin{tabular}{cccccc}  
\toprule
Dataset & Category  & Intersections number & 2-arm & 3-arm & 4-arm \\  
\midrule
Grid 4$\times$4& synthesis  & 16 & 0 & 0 & 16 \\  
Arterial 4$\times$4 & synthesis  & 16 & 0 & 0 & 16 \\  
Grid 5$\times$5~\cite{lu2023dualight} & synthesis  & 25 & 0 & 0 & 25 \\  
Cologne8 & real-world  & 8 & 1 & 3 & 4 \\  
Ingolstadt21 & real-world  & 21 & 0 & 17 & 4 \\  
\bottomrule
\end{tabular}  
\label{sumo statistics}  
\end{table}  

To facilitate a homogeneous observation and action space conducive to the deployment of various MARL algorithms, we employ the GEneral Scenario-Agnostic (GESA) framework to parse each intersection into a standardized 4-arm intersection with eight potential actions.
\subsection{Model details}
We employ IPPO as the base MARL algorithm for $\texttt{eSpark}$ due to its ability to scale to large-scale MARL challenges. We select GPT-4 as our $\texttt{LLM}_c$ and $\texttt{LLM}_g$, specifically opting for the $\texttt{2023-09-01-preview}$ version. The temperature of GPT-4 is set to 0.7, with no frequency penalty and presence penalty. For each scenario, we conduct three runs, setting the batch size for each generation of exploration functions to $K=16$. This batch size is chosen because it guarantees that the initial generation contains at least one executable exploration function for our environment. We limit the number of training iterations to $10$, as we observe that the performance for most scenarios tends to converge within this number of iterations.

\section{Baseline details} \label{baseline details}
In our experiments, we employed three categories of baselines: OR baselines, MARL baselines and pruning baselines. In Table~\ref{baseline details}, we list the characteristics and environment of all the baselines utilized in our study. In the rest of this section, we will elucidate the underlying principles of each OR baseline, articulate the design of the pruning baselines, and present the hyperparameter for the MARL baselines. 

\begin{table*}[hb]
\caption{All the baselines used in the experiments}
\centering
\scalebox{0.9}{
\begin{tabular}{cccccccccc}
\toprule
 \multirow{2}*{Algorithm name} & \multirow{2}*{OR baseline}& \multicolumn{2}{c}{MARL baseline} & \multirow{2}*{Pruning baseline}& \multicolumn{2}{c}{Used environment}\\ 
&   & CTDE & DTDE & & MABIM & SUMO \\ \midrule
Base stock (BS)~\cite{arrow1951optimal} &\checkmark & & &&\checkmark & \\
 $(S,s)$~\cite{blinder1990inventory}& \checkmark& & &&\checkmark &\\
 FTC~\cite{roess2004traffic} &\checkmark & & && & \checkmark\\
 MaxPressure~\cite{kouvelas2014maximum} &\checkmark & & && & \checkmark\\
IPPO& & &\checkmark &&\checkmark &\checkmark\\
QTRAN& &\checkmark & &&\checkmark&\\
QPLEX& &\checkmark & &&\checkmark &\\
MAPPO& &\checkmark & &&\checkmark& \checkmark\\
MPLight~\cite{chen2020toward}& & &\checkmark && &\checkmark \\
CoLight~\cite{wei2019colight}& &\checkmark & && & \checkmark\\
Ramdom pruning& & & & \checkmark & \checkmark &\checkmark\\
$(S,s)$ pruning & & & & \checkmark & \checkmark & \\
Upbound pruning & & & & \checkmark &\checkmark & \\
MaxPressure pruning& & & & \checkmark & & \checkmark\\
 \bottomrule\\
\end{tabular}}
\end{table*}
\subsection{OR baselines}
\subsubsection{Base stock algorithm}
The base stock algorithm constitutes a streamlined and efficacious approach for inventory control, whereby replenishment orders are initiated upon inventory below a predefined threshold level. This policy is traditionally acknowledged as a fundamental benchmark, favored for its straightforwardness and rapid implementation. The computation of the base stock level is facilitated through a programmatic methodology, as explicated in Equation~\ref{base stock policy equation}:

\begin{equation}
\label{base stock policy equation}
\begin{split}
{\bf max}\quad o_{i,j}^t &= \bar{p}_{i,j} \cdot S_{i,j}^t - \bar{c}_{i,j} \cdot S_{i+1,j}^t - \bar{h}_{i,j} \cdot I_{i,j}^{t+1} - \bar{c}_{i,j} \cdot T_{i,j}^0 - \bar{c}_{i,j} \cdot I_{i,j}^0 \\
{\bf s.t}\quad I_{i,j}^{t+1} &= I_{i,j}^t + S_{i+1,j}^{t-\bar{L}_{i,j}} - S_{i,j}^t \\
\quad T_{i,j}^{t+1} &= T_{i,j}^t - S_{i+1,j}^{t-\bar{L}_{i,j}} + S_{i+1,j}^t \\
S_{i,j}^t &= min(I_{i,j}^t, R_{i,j}^t) \\
T_{i,j}^0 &= \sum_{t=-\bar{L}_{i,j}}^{-1} S_{i+1,j}^t \\
z_{i,j} &= I_{i,j}^{t+1} + S_{i+1,j}^t + T_{i,j}^t \\
z_{i,j} &\in \mathbb{R^+}.\\
\end{split}
\end{equation}

In the above equations, $i$, $j$, and $t$ are indexes for the warehouse, SKU, and discrete time, respectively. The indicators $\bar{p}$, $\bar{c}$, $\bar{h}$, and $\bar{L}$ represent the average selling price, cost of procurement, cost of holding, and lead time. The variables $S$, $R$, $I$, and $T$ denote the quantities associated with sales, orders for replenishment, inventory in stock and inventory in transit. $o_{i,j}^t$ describes the profit objective, while $z_{i,j}$ is indicative of the base stock level.

We utilize two approaches for computing stock levels. The first approach, named \textbf{BS static}, involves calculating all base stock levels with historical data from the training set, which are then applied consistently to the test set. The levels remain unchanged during the test period. The second approach, termed as \textbf{BS dynamic}, computes stock levels directly on the test set relying on historical data and updates on a regular basis. 
\subsubsection{$(S,s)$ algorithm}
The $(S, s)$ inventory policy serves as a robust framework for managing stock levels. Under this policy, a restocking order is triggered once the inventory count falls below a predefined threshold, identified as $s$. The objective of this replenishment is to elevate the stock to its upper limit, designated as $S$. Empirical analyses have substantiated the efficacy of this protocol in optimizing inventory control processes. As a result, it is adopted as a benchmark heuristic, with the aim of algorithmically ascertaining the most efficacious $(S, s)$ parameter for each discrete SKU in the given inventory dataset. In our implementation, we conduct a search on the training set to identify the optimal values of $s$ and $S$, after which we apply these values consistently to the test set.
\subsubsection{Fixed-time control algorithm}
The Fixed-Time Control (FTC) algorithm is a traditional traffic signal control strategy predicated on predefined signal plans. These plans are typically designed based on historical traffic flow patterns and do not adapt to real-time traffic conditions. The FTC operates on a static schedule where the signal phases at intersections change at fixed intervals. This approach is straightforward and easy to implement but may not be optimal under variable traffic conditions due to its lack of responsiveness to dynamic traffic demands.

In our implementation, the FTC follows a fixed sequence of signal phases: 'WT-ET', 'NT-ST', 'WL-EL', 'NL-SL', 'WL-WT', 'EL-ET', 'SL-ST', 'NL-NT'. Here, 'W', 'E', 'N', and 'S' denote westbound, eastbound, northbound, and southbound traffic, respectively, while 'T' indicates through movement, and 'L' signifies a left turn. Each phase has a duration of 30 seconds

\subsubsection{MaxPressure algorithm}
The MaxPressure algorithm represents a more advanced traffic signal control strategy that dynamically adjusts signal phases in response to real-time traffic conditions. It calculates the "pressure" at each intersection, defined as the difference between the number of vehicles on the incoming and outgoing lanes. The algorithm aims to optimize traffic flow by selecting signal phases that reduce the maximum pressure across the network, thus alleviating congestion and enhancing network throughput. Unlike FTC, MaxPressure is adaptive and can continuously optimize signal timing based on the current traffic state, making it more suitable for managing fluctuating traffic volumes.

\subsection{Hyperparameters settings for MARL baselines}
The following table enumerates the hyperparameters employed during the training process for all MARL baselines. For all test scenarios, training is performed with a uniform suite of hyperparameters that have not undergone specialized fine-tuning. 

\begin{table}[t]
    \caption{Hyperparameters of MARL Algorithms Used in MABIM and SUMO Environments. `-' indicates that the algorithm is not set or does not contain this hyperparameter.}  
    \label{hyperparameters for MARL}  
    \centering  
    \scalebox{0.85}{
    \begin{tabular}{ccccc|ccc}  
        \toprule  
        \multirow{2}*{Hyperparameter} & \multicolumn{4}{c}{MABIM environment} & \multicolumn{3}{c}{SUMO environment}\\
         & IPPO & QTRAN & QPLEX & \multicolumn{1}{c}{MAPPO} & IPPO & CoLight & MPLight\\  
        \midrule  
        Training steps & 5020000 & 5020000 &5020000 & 5020000 &2400000 & 2400000 &2400000 \\
        Discount rate & 0.985 & 0.985& 0.985 & 0.985& 0.985 & 0.9& 0.9  \\  
        Optimizer & Adam & Adam& Adam & Adam& Adam & RMSProp& RMSProp \\  
        Optimizer alpha & 0.99 & 0.99 & 0.99 & 0.99  & 0.99 & 0.95 & 0.95\\  
        Optimizer eps & 1e-5 & 1e-5 & 1e-5 & 1e-5 & 1e-5 & 1e-7 & 1e-7 \\  
        Learning rate & 5e-4 & 5e-4 & 5e-4 & 5e-4& 5e-4 & 1e-3 & 1e-3 \\  
        Grad norm clip & 10 & 10& 10 & 10  & 10 & -& -  \\  
        Eps clip & 0.2 & -&-&0.2  & 0.2 & -&- \\  
        Critic coef & 0.5 & - &-&0.5 & 0.5 & - &-\\  
        Entropy coef & 0 & - &-&0 & 0 & - &-\\  
        Accumulated episodes &  4 & 8&  8 & 4 &  4 & 10 &  10 \\
        Number of neighbors&-&-&-&-& - & 5 & -\\
        \bottomrule  
    \end{tabular}  
    }
\end{table}  

\subsection{Pruning baselines} \label{Heuristic Baselines}
\subsubsection{Random pruning}
Random pruning is implemented by randomly masking a certain percentage of the available actions. During the action selection process, each agent will have $p$ percent of its available actions randomly masked. To balance the observability of the pruning's impact with the preservation of the algorithm's capacity to utilize prior experience, we set $p=0.3$.

\subsubsection{$(S,s)$ pruning}
According to the $(S,s)$ algorithm, for a given SKU, a replenishment quantity of $\Delta = S - s$ is ordered when the current inventory level falls below the threshold $s$; otherwise, no order is placed. We extend the reference replenishment quantity $\Delta$ to a range $[r_1\times\Delta, r_2\times\Delta]$, where $r_1, r_2$ are both real numbers and $0 \leq r_1 \leq 1$ and $r_2 \geq 1$. Actions within this interval are deemed available, while those outside of this range are masked. In our implementation, we select $r_1=0.5$ and $r_2=2$.

\subsubsection{MaxPressure pruning}
The MaxPressure pruning method utilizes the heuristic concept of "pressure" at an intersection to prune actions. We calculate the pressure associated with each action, and these pressures are then ranked. The actions with the top-k highest pressures are rendered available for selection. Actions not meeting this threshold are subsequently masked.

A standardized intersection warped through the GESA is modeled as a four-arm intersection comprising eight potential actions. We empirically set $k=4$ to ensure effective pruning while maintaining a sufficient number of available actions.
\section{Additional results}\label{additional results}
\subsection{Additional main results}
In this section, we present the complete main experimental results for SUMO in Section~\ref{experiment results}, and give the consumption of GPT tokens here.
\begin{table}[ht]  \label{additional sumo result}
\centering  
\caption{Detailed performance in SUMO, includes the mean and standard deviation.}  
\scalebox{0.85}{
\begin{tabular}{ccccccc}  
\toprule
Method & Metric & Grid 4$\times$4 & Arterial 4$\times$4 & Grid 5$\times$5 & Cologne8 & Ingolstadt21 \\  
\midrule
\multirow{3}*{FTC}& Delay & 161.14$\pm$ 3.77 & 1229.68$\pm$16.79 & 820.88$\pm$24.36 & 85.27$\pm$1.21 & 224.96$\pm$11.91 \\  
& Trip time & 291.48$\pm$4.45 & 676.77$\pm$13.70 & 584.54$\pm$24.17 & 145.93$\pm$0.84 & 352.06$\pm$9.29 \\  
& Wait time & 155.66$\pm$3.42 & 521.86$\pm$13.33 & 441.63$\pm$21.13 & 58.92$\pm$0.68 & 161.22$\pm$7.88 \\  
\midrule  
\multirow{3}*{MaxPressure} & Delay & 63.39$\pm$1.34 & 995.23$\pm$77.02 & 242.85$\pm$16.04 & 31.63 $\pm$ 0.61 & 228.64$\pm$15.83 \\  
& Trip time & 174.68$\pm$2.05 & 702.09$\pm$23.61 & 269.35$\pm$9.62 & 95.67$\pm$0.62 & 352.30 $\pm$ 15.06 \\  
& Wait time & 37.37$\pm$1.06 & 511.06$\pm$22.55 & 114.36$\pm$6.48 & 11.03$\pm$ 0.28 & 159.44$\pm$13.34 \\  
\midrule  
\multirow{3}*{IPPO} & Delay & 48.40$\pm$0.45 & 884.73$\pm$38.94 & 228.78$\pm$11.59 & 27.60$\pm$1.70 & 342.97$\pm$43.61 \\  
& Trip time & 160.12$\pm$0.60 & 506.18$\pm$10.39 & 238.03$\pm$7.10 & 91.41$\pm$1.60 & 464.50$\pm$43.30 \\  
& Wait time & 22.69$\pm$0.38 & 435.44$\pm$77.54 & 91.84$\pm$6.31 & 7.70$\pm$0.82 & 267.51$\pm$40.53 \\  
\midrule  
\multirow{3}*{MAPPO} & Delay & 51.25$\pm$0.58 & 958.43$\pm$181.72 & 221.62$\pm$20.73 & 32.55$\pm$4.66 & 347.59$\pm$47.59 \\  
& Trip time & 160.01$\pm$0.54 & 757.40$\pm$242.00 & 247.56$\pm$3.71 & 94.31$\pm$1.77 & 480.66$\pm$49.46 \\  
& Wait time & 25.41$\pm$0.54 & 609.80$\pm$255.22 & 97.10$\pm$5.22 & 9.39$\pm$1.53 & 283.59$\pm$43.20 \\  
\midrule  
\multirow{3}*{MPLight} & Delay & 63.51$\pm$0.64 & 1142.98$\pm$79.65 & 223.44$\pm$16.18 & 37.93$\pm$0.45 & 196.74$\pm$9.88 \\  
& Trip time & 172.47$\pm$0.89 & 583.21$\pm$45.84 & 255.49$\pm$6.26 & 110.56$\pm$1.18 & 331.42$\pm$11.79 \\  
& Wait time & 40.32$\pm$0.96 & 366.27$\pm$58.03 & 126.42$\pm$5.31 & 12.98$\pm$0.57 & 126.09$\pm$13.60 \\  
\midrule  
\multirow{3}*{CoLight} & Delay & 53.40$\pm$1.89 & 820.67$\pm$58.65 & 339.66$\pm$48.55 & 27.48$\pm$1.30 & 296.47$\pm$106.82 \\  
& Trip time & 165.77$\pm$1.89 & 470.33$\pm$12.34 & 305.41$\pm$44.43 & 91.66$\pm$1.29 & 410.59$\pm$97.29 \\  
& Wait time & 27.25$\pm$1.64 & 312.47$\pm$16.63 & 157.65$\pm$35.69 & 9.35$\pm$1.09 & 215.98$\pm$90.62 \\  
\midrule  
\multirow{3}*{eSpark} & Delay & 48.36$\pm$0.32 & 854.22$\pm$68.21 & 209.49 $\pm$13.98 & 25.39$\pm$1.27 & 243.92$\pm$15.81 \\  
& Trip time & 159.74$\pm$0.44 & 484.87$\pm$58.21 & 235.20$\pm$6.80 & 89.50$\pm$1.36 & 367.57$\pm$15.03 \\  
& Wait time & 22.58$\pm$0.29 & 328.82$\pm$61.70 & 88.38$\pm$4.41 & 6.94$\pm$0.38 & 180.09$\pm$13.84 \\  
\bottomrule
\end{tabular}  }
\label{your_label_here}  
\end{table}  

Since we do not design specific prompts for different scenario tasks within the same environment, we calculate the average token consumption for all scenarios with each environment and display it in Table~\ref{token assumption}.

\begin{table}[ht]
    \label{token assumption}
    \caption{Average token assumption for MABIM and SUMO.}
    \label{parameter in different challenge}
    \centering
    \scalebox{1}{
        \begin{tabular}{cc} 
            \toprule
            Environment & Token assumption (M)  \\
            \midrule
            MABIM&  3.0 \\
             SUMO& 2.6  \\
            \bottomrule
        \end{tabular}
    }
\end{table}

\subsection{Detailed results of the pruning methods}
In this section, we provide the detailed results for multiple pruning baselines as discussed in Section~\ref{intelligent exploration functions}, along with results of \texttt{eSpark} for comparison.
\begin{table*}[htb]
\vspace{-3mm}
\caption{Detailed performance of various pruning methods in MABIM.}\vspace{3mm}%
\centering
\scalebox{0.71}{
\begin{tabular}{ccccccccccc}  
\toprule
\multirow{3}*{\textbf{Method}}&\multicolumn{10}{c}{\textbf{Avg. profits (K)}}\\ 
 & \multicolumn{5}{c}{100 SKUs scenarios} & \multicolumn{5}{c}{500 SKUs scenarios}\\ 
  & Standard & 2 echelons  & 3 echelons & Lower& \multicolumn{1}{c}{Lowest}  & \multicolumn{1}{c}{Standard} & 2 echelons & 3 echelons  & Lower  & Lowest \\ \midrule
Random pruning & 733.0 & 1407.6 & 1426.6 & 511.6 & 262.0 & 2718.0 & 8667.4 &9464.1 & 2535.5 & 2202.1 \\  
$(S,s)$ pruning & 394.4 & 832.3 & 933.4 & 441.1 & 258.3 & 3884.3 & 9248.3 & 10282.1 & 3517.9 & 2085.6 \\  
Upbound pruning & 745.0 & 630.2 & -2.2 & 557.7 & 294.7 & 3261.3 & 2473.6 & 1657.6 & 2833.0 & 2167.7 \\  
eSpark & 823.7 & 1811.4 & 2598.7 & 579.5 & 405.0 & 4468.6 & 9437.3 & 12134.2 & 3775.7 & 2519.5 \\ \bottomrule
\end{tabular}  
}
\label{pruning methods raw data mabim}
\end{table*}

\begin{table}[ht]  
\centering  
\caption{Detailed performance of various pruning methods in SUMO, includes the mean and standard deviation.}  
\scalebox{0.8}{
\begin{tabular}{ccccccc}  
\toprule
Method & Metric & Grid 4$\times$4 & Arterial 4$\times$4 & Grid 5$\times$5 & Cologne8 & Ingolstadt21 \\  
\midrule
\multirow{3}*{Ramdom pruning}& Delay & 49.07$\pm$0.36 & 858.33$\pm$48.20 & 238.57$\pm$9.45 & 25.89$\pm$1.34 & 353.38$\pm$24.39 \\  
& Trip time & 160.13$\pm$0.58 & 548.08$\pm$61.84 & 241.92$\pm$9.60 & 89.75$\pm$1.26 & 478.53$\pm$22.54 \\  
& Wait time& 22.66$\pm$0.14 & 387.25$\pm$43.93 & 93.49$\pm$8.09 & 7.03$\pm$0.37 & 281.97$\pm$21.90 \\  
\midrule  
\multirow{3}*{MaxPressure pruning} & Delay & 48.78$\pm$0.37 & 890.04$\pm$121.50 & 234.27$\pm$14.28 & 26.26$\pm$0.36 & 337.02$\pm$62.18 \\  
& Trip time & 160.72$\pm$0.17 & 533.36$\pm$78.08 & 253.68$\pm$17.68 & 90.36$\pm$0.88 & 448.11$\pm$65.32 \\  
& Wait time& 23.03$\pm$0.56 & 391.96$\pm$78.34 & 102.29$\pm$10.34 & 7.33$\pm$0.12 & 257.98$\pm$61.91 \\  
\midrule  
\multirow{3}*{eSpark} & Delay & 48.36$\pm$0.32 & 854.22$\pm$68.21 & 209.49 $\pm$13.98 & 25.39$\pm$1.27 & 243.92$\pm$15.81 \\  
& Trip time & 159.74$\pm$0.44 & 484.87$\pm$58.21 & 235.20$\pm$6.80 & 89.50$\pm$1.36 & 367.57$\pm$15.03 \\  
& Wait time& 22.58$\pm$0.29 & 328.82$\pm$61.70 & 88.38$\pm$4.41 & 6.94$\pm$0.38 & 180.09$\pm$13.84 \\  
\bottomrule
\end{tabular}  }
\label{pruning raw data sumo}  
\end{table}  

\subsection{Detailed results of the ablations}
In this section, we provide the detailed results for ablations in Section~\ref{retention training result}, along with results of \texttt{eSpark} for comparison.
\begin{table*}[htb]
\vspace{-3mm}
\caption{Detailed performance of ablations in MABIM.}\vspace{3mm}%
\centering
\scalebox{0.8}{
\begin{tabular}{cccccc}  
\toprule
\multirow{2}*{\textbf{Method}}&\multicolumn{5}{c}{\textbf{Avg. profits (K)}}\\ 
  & Standard & 2 echelons  & 3 echelons & Lower& \multicolumn{1}{c}{Lowest} \\ \midrule
eSpark w/o retention & 719.0 & 1806.1 & 2388.6 & 547.7 & 294.1 \\  
eSpark w/o LLM & 754.7 & 1538.9 & 1109.9 & 536.7 & 198.5 \\  
eSpark & 823.7 & 1811.4 & 2598.7 & 579.5 & 405.0 \\ \bottomrule
\end{tabular}  
}
\label{ablation raw data mabim}
\end{table*}\vspace{-5mm}

\begin{table}[ht]  
\centering  
\caption{Detailed performance of ablations in SUMO, includes the mean and standard deviation.}  
\scalebox{0.8}{
\begin{tabular}{ccccccc}  
\toprule
Method & Metric & Grid 4$\times$4 & Arterial 4$\times$4 & Grid 5$\times$5 & Cologne8 & Ingolstadt21 \\  
\midrule
\multirow{3}*{eSpark w/o retention}& Delay & 48.75$\pm$0.49 & 851.56$\pm$37.98 & 211.07$\pm$22.01 & 25.18$\pm$0.51 & 246.05$\pm$14.88 \\  
& Trip time & 160.42$\pm$0.54 & 487.05$\pm$65.66 & 248.96$\pm$15.04 & 89.23$\pm$0.49 & 363.28$\pm$14.83 \\  
& Wait time& 22.97$\pm$0.39 & 338.64$\pm$67.07 & 97.77$\pm$10.25 & 7.14$\pm$0.15 & 175.79$\pm$12.47 \\  
\midrule  
\multirow{3}*{eSpark w/o LLM} & Delay & 48.67$\pm$0.48 & 854.10$\pm$63.38 & 212.25$\pm$16.13 & 24.70$\pm$0.56 & 257.14$\pm$40.50 \\  
& Trip time & 159.90$\pm$0.56 & 491.49$\pm$67.52 & 238.33$\pm$9.41 & 88.57$\pm$0.60 & 376.46$\pm$40.01 \\  
& Wait time& 22.99$\pm$0.43 & 342.62$\pm$73.88 & 90.89$\pm$6.60 & 6.69$\pm$0.23 & 188.14$\pm$37.03 \\  
\midrule  
\multirow{3}*{eSpark} & Delay & 48.36$\pm$0.32 & 854.22$\pm$68.21 & 209.49 $\pm$13.98 & 25.39$\pm$1.27 & 243.92$\pm$15.81 \\  
& Trip time & 159.74$\pm$0.44 & 484.87$\pm$58.21 & 235.20$\pm$6.80 & 89.50$\pm$1.36 & 367.57$\pm$15.03 \\  
& Wait time& 22.58$\pm$0.29 & 328.82$\pm$61.70 & 88.38$\pm$4.41 & 6.94$\pm$0.38 & 180.09$\pm$13.84 \\  
\bottomrule
\end{tabular}  }
\label{ablation raw data sumo}  
\end{table}  

\section{Policy performance analysis}\label{performance comparasion}
To gain a deeper understanding of the policy difference between \texttt{eSpark} and IPPO, we select the capacity limit and multiple echelons challenges within the 100 SKUs scenario as representative cases, presenting in Figure~\ref{coorporation_and_competition} the daily profit of \texttt{eSpark} and IPPO on the test dataset challenged with capacity limit and multiple echelons. In the capacity limit challenges, a high daily overflow ratio and low fulfillment ratio suggest that IPPO falls short in mastering the adjustment of restocking quantities for individual agents when capacity is limited, leading to overstocking and substantial overflow. Concurrently, this prevents SKUs required by consumers from being accommodated, culminating in an exceedingly low fulfillment ratio. In multiple echelon challenges, the fulfillment ratio at each echelon gradually decreases over time, indicating that IPPO struggles to comprehend and learn the intricate interplay required for cooperation among various echelons, thereby inadequately fulfilling the demands of each echelon. Such shortcomings not only diminish potential profits but also subject the system to considerable backlog expenses. However, through action space pruning, evolutionary search, and reflection, \texttt{eSpark} manages to reduce the search within the vast space, selecting the most effective exploration functions and continuously improving. This approach significantly reduces overflow in the capacity limit scenario and successfully learns suitable cooperation methods for multiple echelons.

\begin{figure}[h]
\centering
\includegraphics[width=0.95\textwidth]{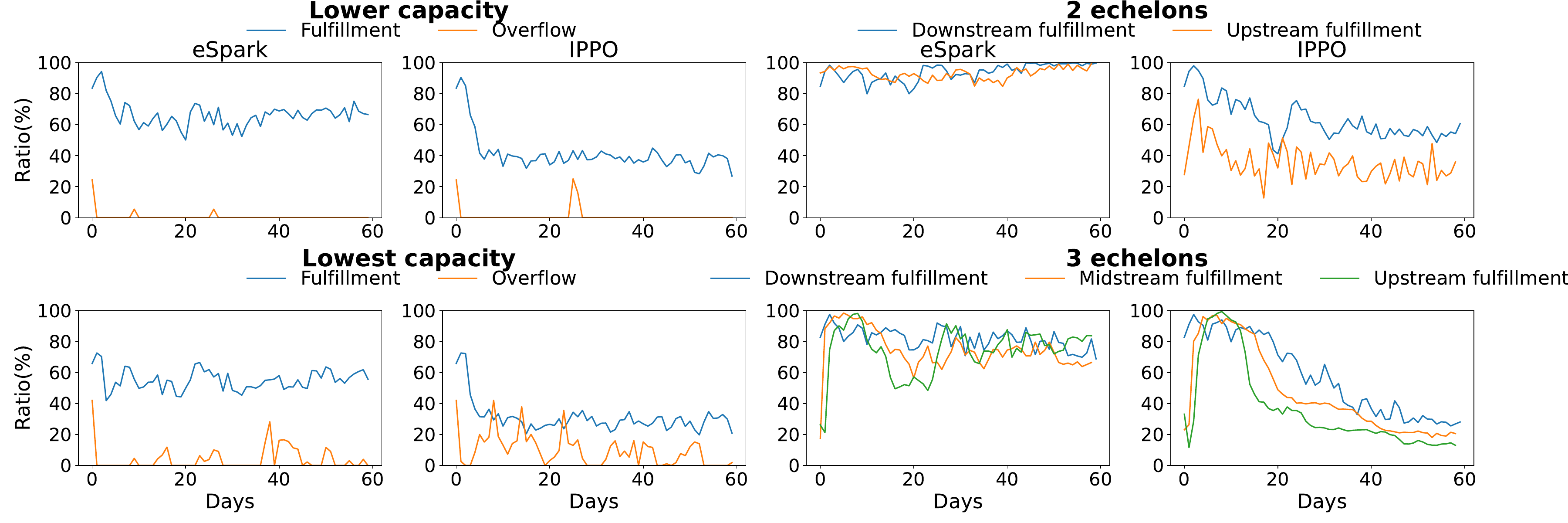}\vspace{-5mm}
\caption{The performance comparison between \texttt{eSpark} and IPPO in 100 SKUs scenarios. In capacity-limited scenarios, \texttt{eSpark} strives to meet the demands while minimizing overflow costs, boasting a lower overflow ratio and a higher fulfillment ratio. In the multiple-echelon challenge, \texttt{eSpark} achieves nuanced collaboration across different echelons, ensuring high fulfillment ratios.} \vspace{-2mm}
\label{coorporation_and_competition}
\end{figure}
\newpage
\section{Full prompts} \label{full prompts}
We reference the prompt design outlined in the Eureka~\cite{ma2023eureka} and adapt it specifically for exploration function generation. Our prompt provides general guidance on the design of exploration functions, specific code formatting suggestions, feedback, and recommendations for improvement. We present our prompts for MABIM below.

\begin{figure}[H]
\centering
\includegraphics[width=1\textwidth]{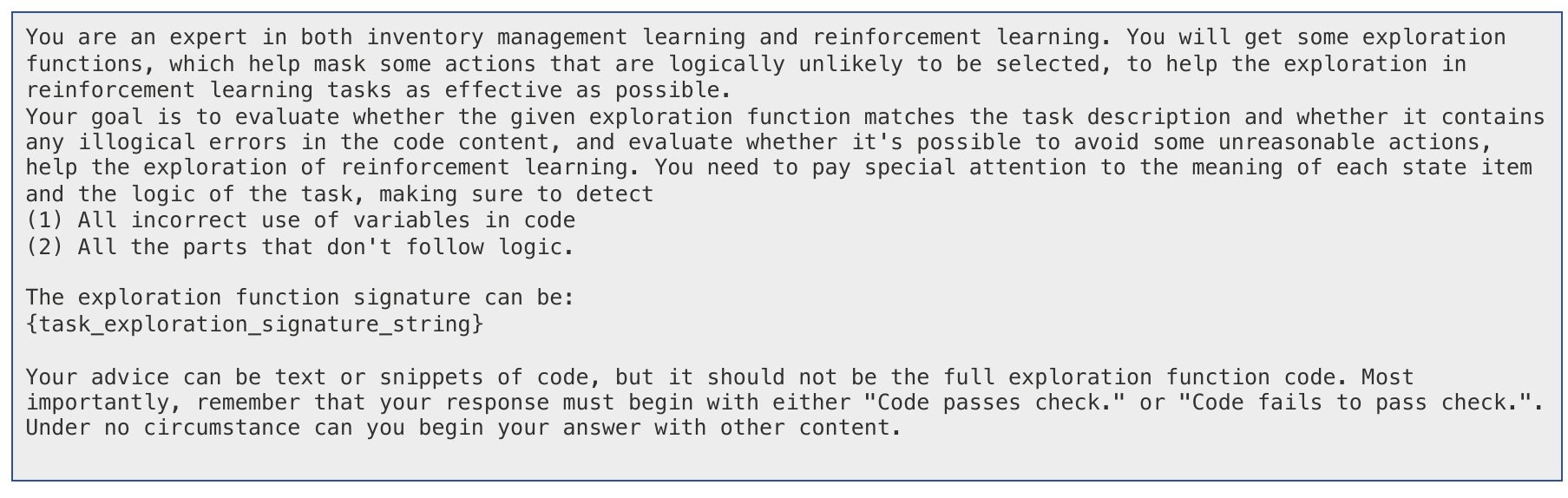}\vspace{-5mm}
\caption{System prompt for $\texttt{LLM}_c.$} \vspace{-3mm}
\end{figure}

\begin{figure}[H]
\centering
\includegraphics[width=1\textwidth]{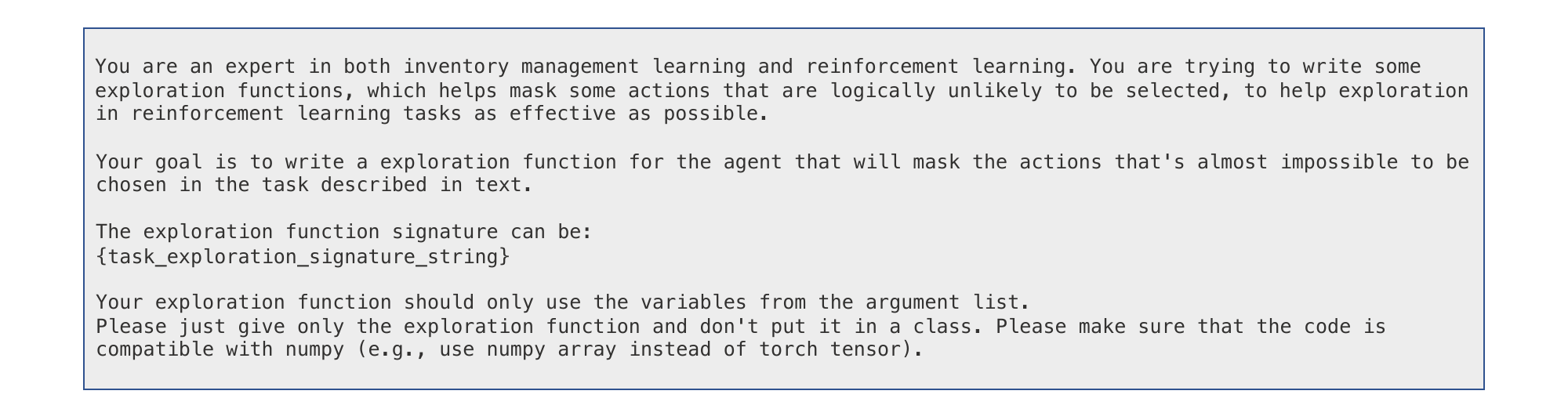}\vspace{-5mm}
\caption{System prompt for $\texttt{LLM}_g$.} 
\end{figure}

\begin{figure}[H]
\centering
\includegraphics[width=1\textwidth]{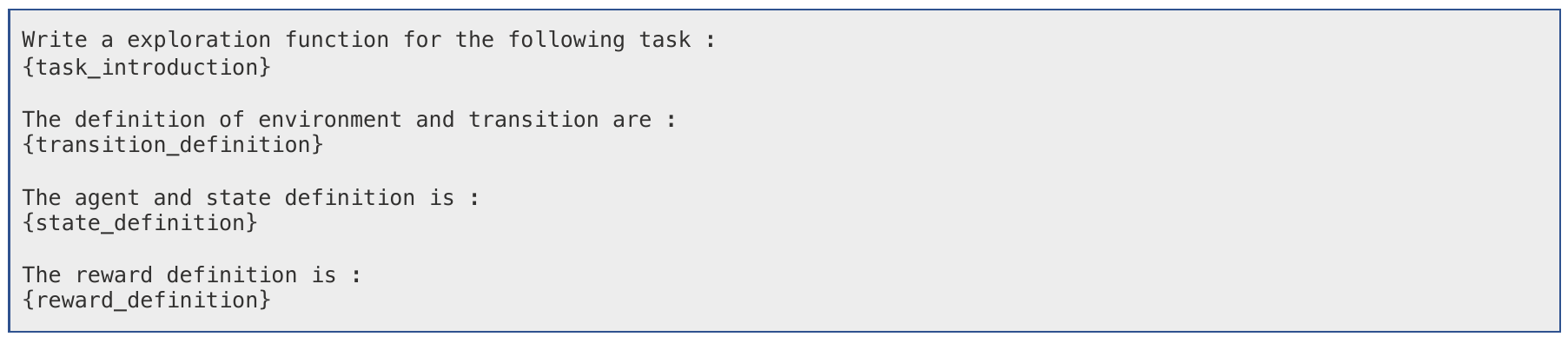}\vspace{-5mm}
\caption{Initial prompt for $\texttt{LLM}_g$.} 
\end{figure}

\begin{figure}[H]
\centering
\includegraphics[width=1\textwidth]{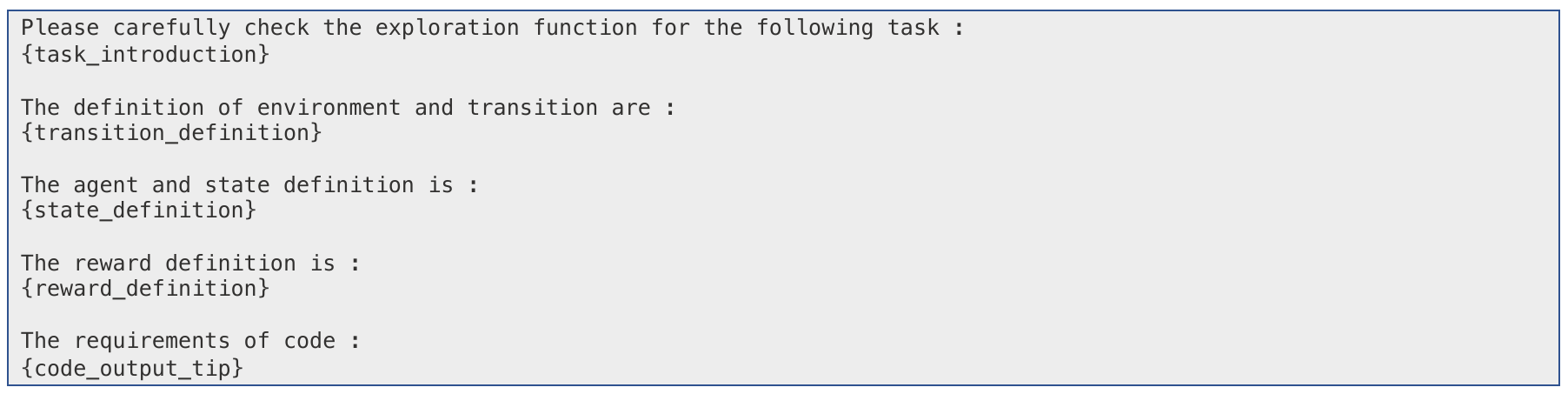}\vspace{-4mm}
\caption{Initial prompt for $\texttt{LLM}_c$.} 
\end{figure}

\begin{figure}[H]
\centering
\includegraphics[width=1\textwidth]{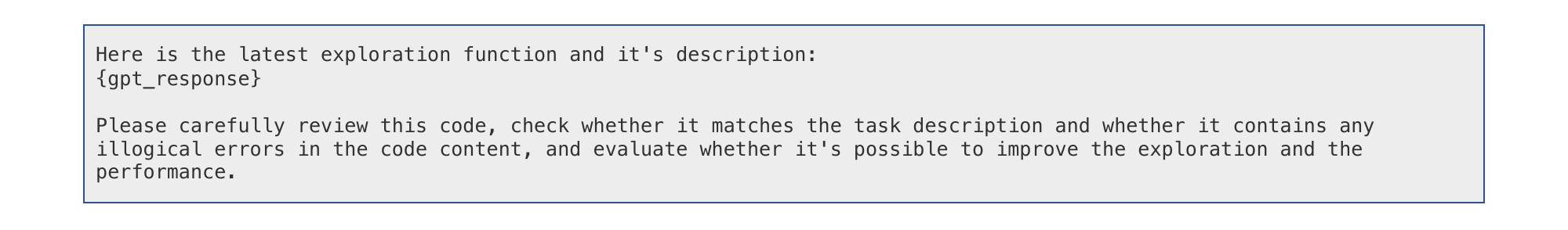}\vspace{-6mm}
\caption{$\texttt{LLM}_g$'s feedback to $\texttt{LLM}_c$.} 
\end{figure}

\begin{figure}[H]
\centering
\includegraphics[width=1\textwidth]{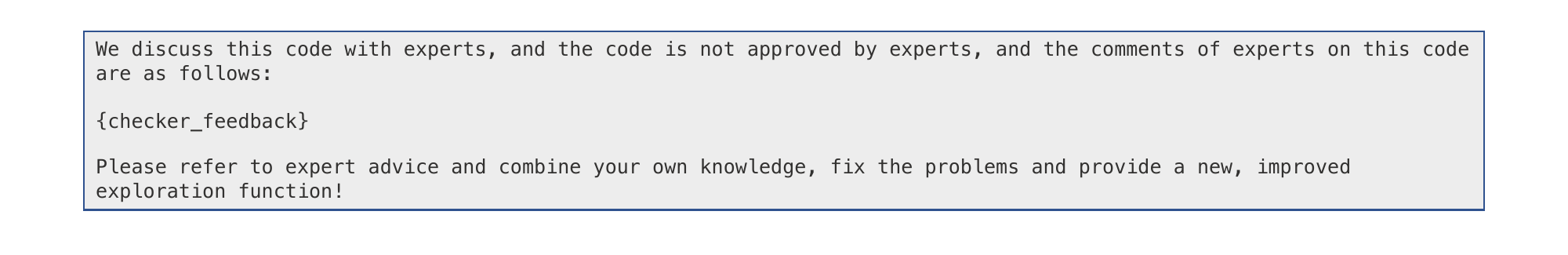}\vspace{-6mm}
\caption{$\texttt{LLM}_c$'s feedback to $\texttt{LLM}_g$.} 
\end{figure}

\begin{figure}[H]
\centering
\includegraphics[width=1\textwidth]{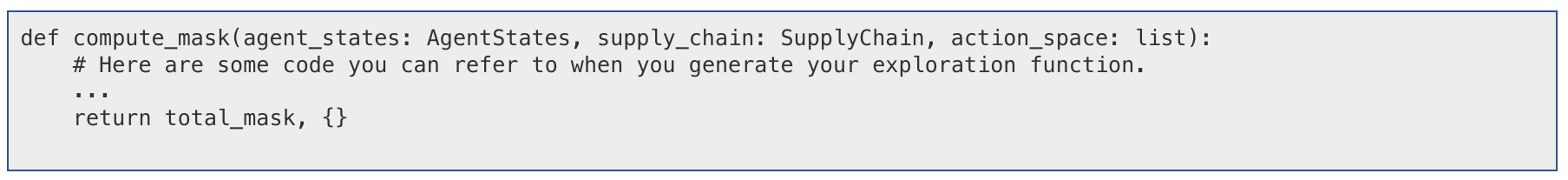}\vspace{-4mm}
\caption{Signature of exploration function.} 
\end{figure}

\begin{figure}[H]
\centering
\includegraphics[width=1\textwidth]{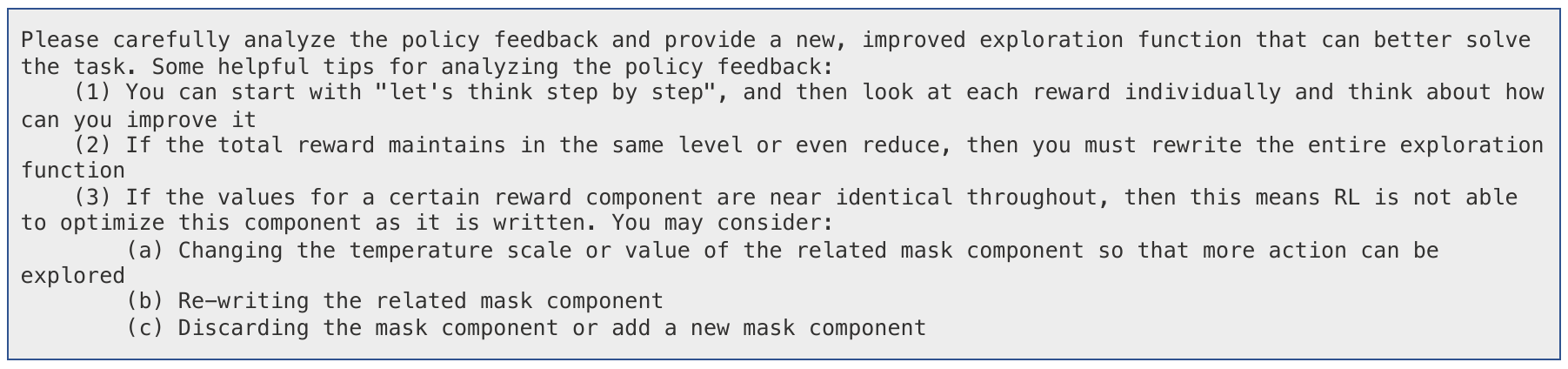}\vspace{-3mm}
\caption{Output and improvement tips for $\texttt{LLM}_g$.} 
\end{figure}

\begin{figure}[H]
\centering
\includegraphics[width=1\textwidth]{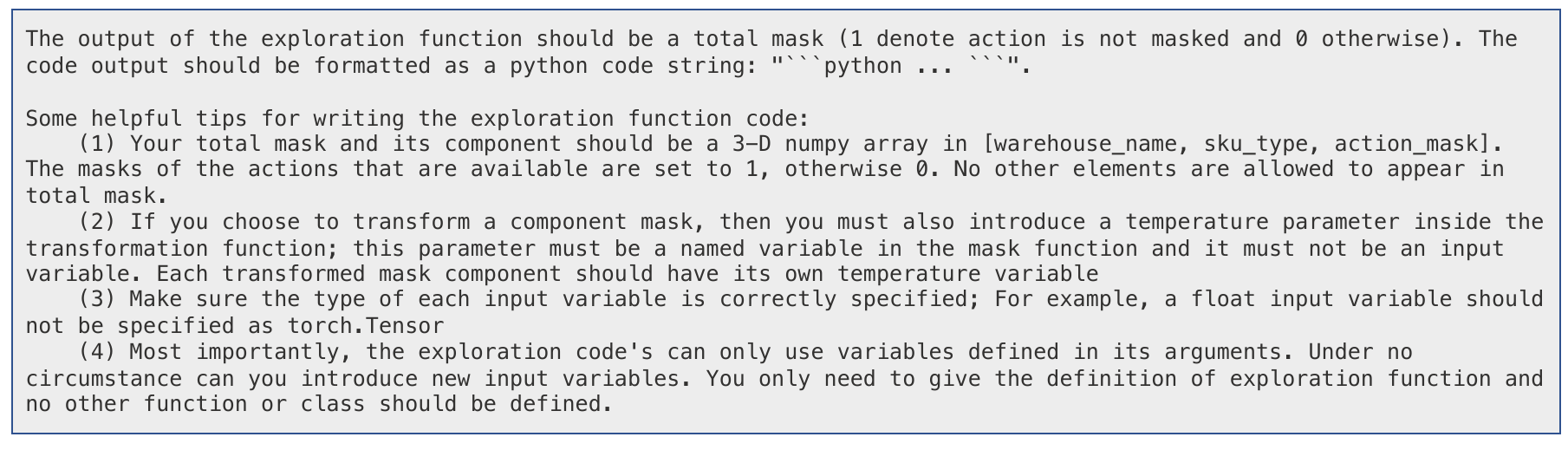}\vspace{-4mm}
\caption{Output format for $\texttt{LLM}_g$.} 
\end{figure}

\begin{figure}[H]
\centering
\includegraphics[width=1\textwidth]{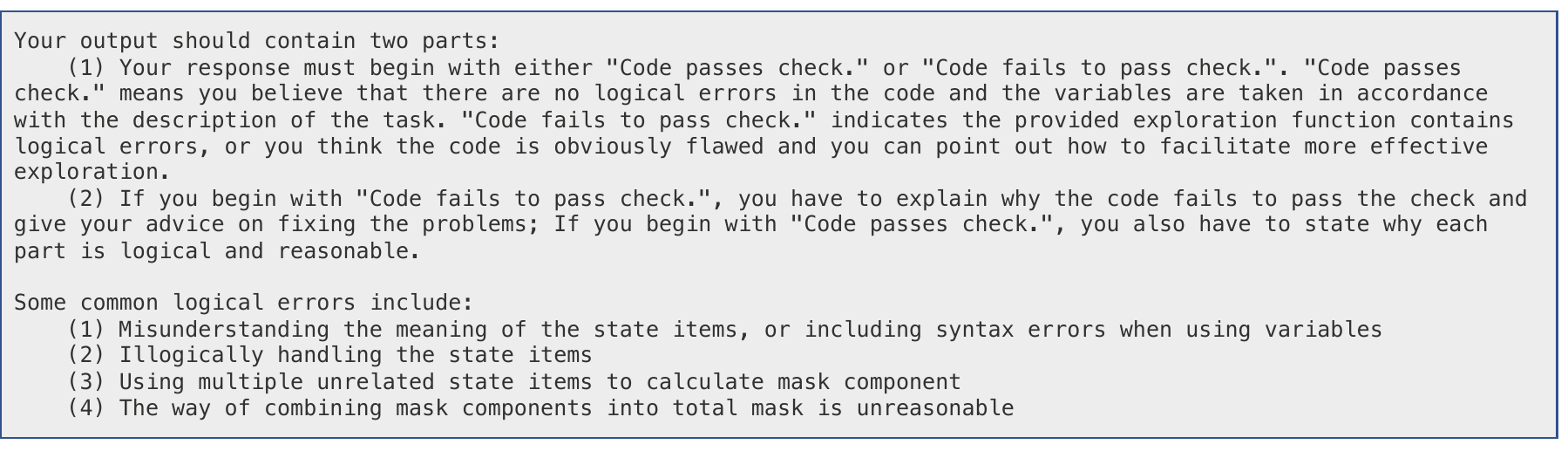}\vspace{-3mm}
\caption{Output format for $\texttt{LLM}_c$.} 
\end{figure}

\begin{figure}[H]
\centering
\includegraphics[width=1\textwidth]{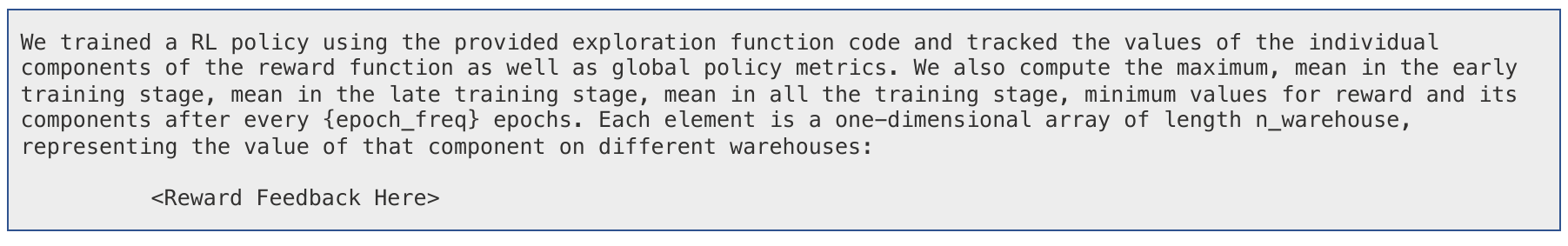}\vspace{-4mm}
\caption{Reward feedback and action feedback.} 
\end{figure}

\newpage

\section{\texttt{eSpark}'s exploration function editing} \label{eSpark improvement}
In this section, we demonstrate the reward editing capabilities of $\texttt{eSpark}$. $\texttt{eSpark}$ is capable of reflecting upon feedback to optimize the exploration for subsequent iterations.
\vspace{-2mm}
\begin{figure}[H]
\centering
\includegraphics[width=0.95\textwidth]{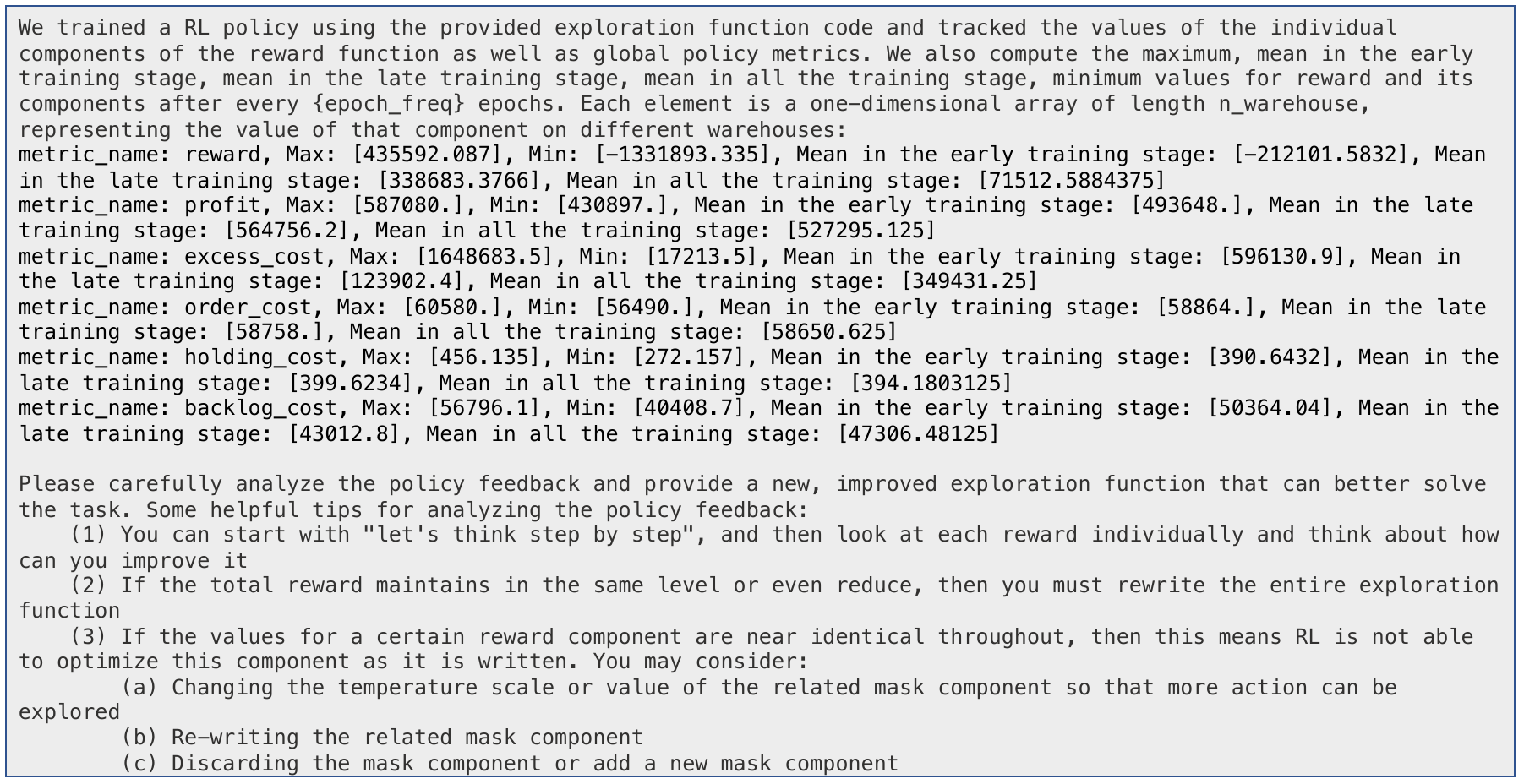}\vspace{-2mm}
\caption{Policy feedback from the last iteration.} 
\end{figure}

\vspace{-2mm}
\begin{figure}[H]
\centering
\includegraphics[width=0.95\textwidth]{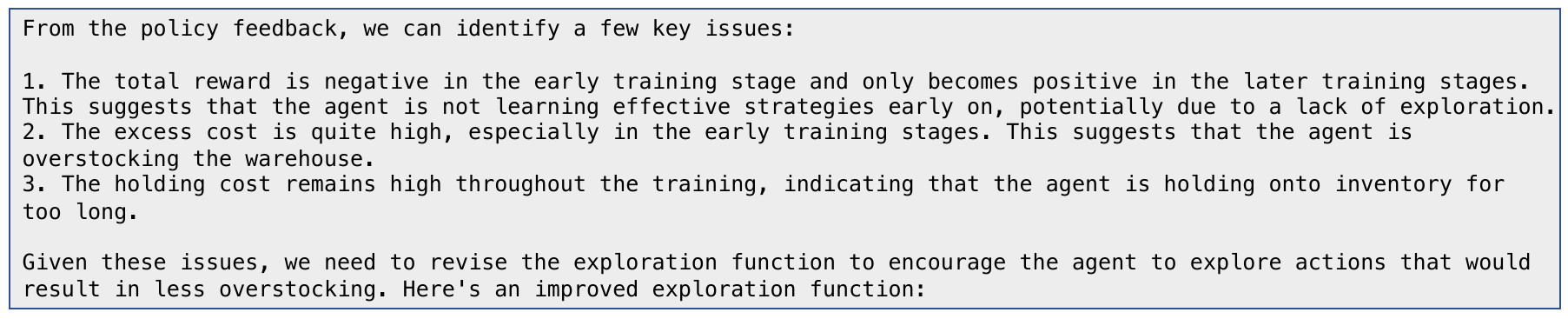}\vspace{-2mm}
\caption{\texttt{eSpark} reflects upon the policy feedback and proposes the modification.} 
\end{figure}

\begin{figure}[H]
\centering
\includegraphics[width=0.9\textwidth]{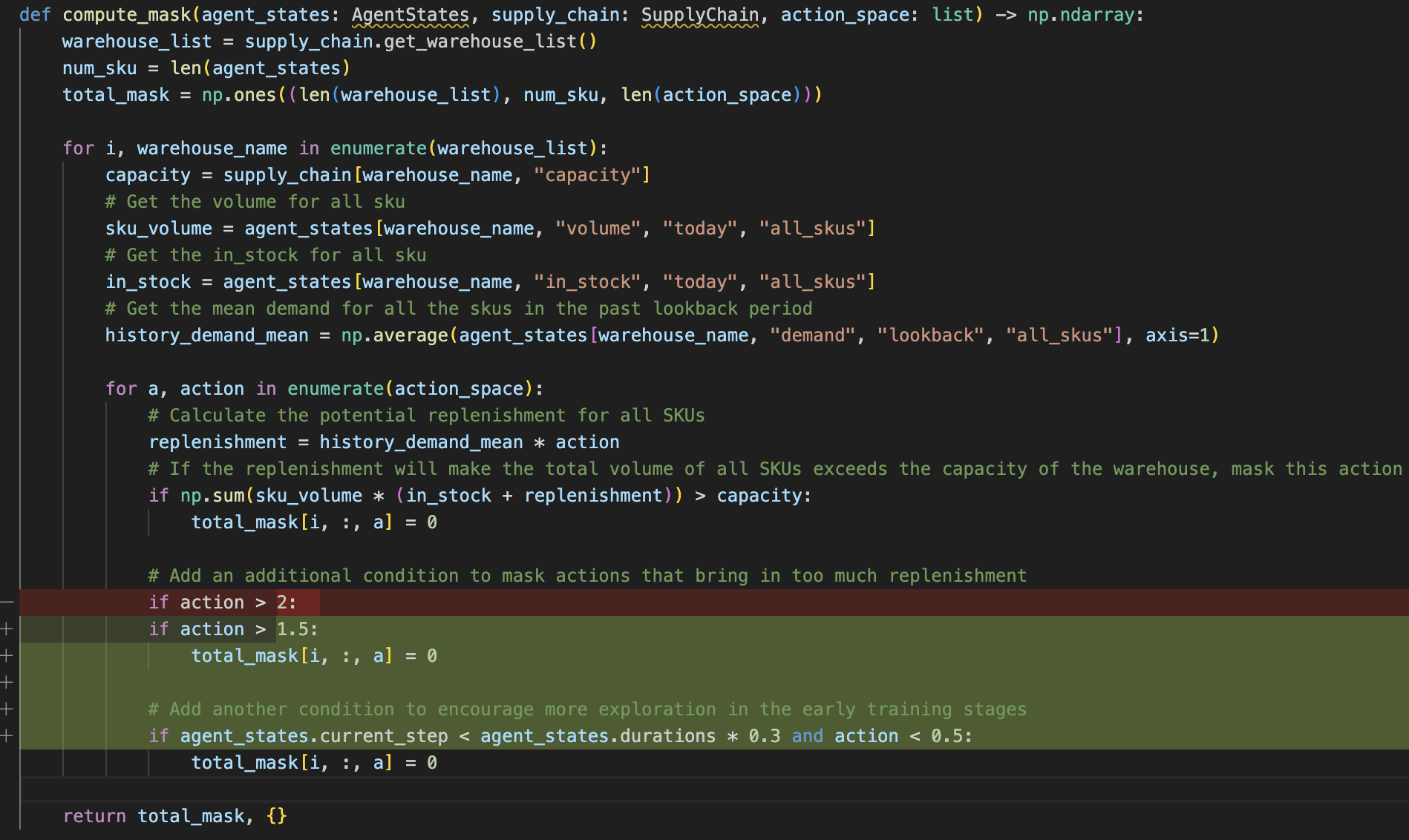}
\caption{Comparison of exploration functions before and after editing.} 
\end{figure}

\section{Impact statement}\label{impact statement}

The integration of MARL into real-world applications holds transformative potential for societal infrastructure, including transportation, logistics, and public service coordination. Our work introduces \texttt{eSpark}, a framework that significantly improves MARL scalability and efficiency by leveraging large language models to prune unnecessary action spaces. This advancement fosters the development of autonomous systems capable of complex, cooperative interactions, without the need for extensive manual tuning. By enabling the management of large-scale agent systems, it can optimize resource allocation, reduce congestion, and enhance service delivery in urban environments. The technology also promises to improve the robustness and reliability of AI systems, ensuring they operate within ethical boundaries and contribute positively to societal norms and values. As we move towards an increasingly automated future, \texttt{eSpark} represents a step towards responsible AI development, prioritizing societal well-being and the advancement of collective, intelligent problem-solving.


\end{document}